\documentclass{amsart}
\pdfoutput=1

\usepackage{amssymb}
\usepackage{amsthm}
\usepackage{amsfonts}
\usepackage{amsmath}
\usepackage{bbm}

\usepackage{pmboxdraw}
\usepackage{verbatim} 
\usepackage{graphicx}
\usepackage{color} 
\usepackage[colorlinks=true, citecolor=blue, filecolor=black, linkcolor=black, urlcolor=black]{hyperref}
\usepackage{cite}
\usepackage[normalem]{ulem}
\usepackage{subcaption}
\usepackage{tabulary}
\newcolumntype{K}[1]{>{\centering\arraybackslash}p{#1}}

\usepackage{mathtools}
\usepackage{kantlipsum}
\allowdisplaybreaks
\usepackage{enumitem} 
\usepackage{tikz}
\usepackage{pgfplots}
\pgfplotsset{compat=1.18}
\usepackage{todonotes}
\usepackage[table]{xcolor}
\usepackage{graphicx}

\usetikzlibrary{calc}
\usetikzlibrary{decorations.pathreplacing}

\newcommand{\RN}[1]{%
	\textup{\uppercase\expandafter{\romannumeral#1}}%
}

\def\wh{\widehat}

\def\R{\mathbb{R}}

\newcommand{\re}{\operatorname{Re}}

\newcommand{\ud}{\, \mathrm{d}}

\newcommand{\sgn}{\operatorname{sgn}}

\usepackage{tcolorbox}
\tcbuselibrary{most}
\newtcolorbox{defn}[1]{breakable,
colbacktitle=gray!50!white, fonttitle=\bfseries, coltitle=black, title=Definition: {#1}}

\theoremstyle{plain}
\newtheorem{thm}{Theorem}[section]

\newtheorem{lem}[thm]{Lemma}

\newtheorem{prop}[thm]{Proposition}

\theoremstyle{remark}
\newtheorem{rem}{Remark}

\numberwithin{equation}{section}

\begin{document}

\title[Half-space constrained Riesz gas]{
Confinement Transitions in Half-space constrained Riesz Gases}

\author{Sung-Soo Byun}
\address{Department of Mathematical Sciences and Research Institute of Mathematics, Seoul National University, Seoul 08826, Republic of Korea}
\email{sungsoobyun@snu.ac.kr}

\author{Yong-Woo Lee}
\address{Department of Mathematical Sciences, Seoul National University, Seoul 08826, Republic of Korea}
\email{hellowoo@snu.ac.kr}

\author{Eui Yoo}
\address{Department of Mathematical Sciences, Seoul National University, Seoul 08826, Republic of Korea}
\email{yysh0227@snu.ac.kr}

\begin{abstract} 
We study a Riesz gas with interaction parameter $s \in (d-3,d)$ in an arbitrary spatial dimension $d$, confined to a half-space by a hard wall. We prove that the equilibrium measure exhibits a dichotomy according to the interaction range. For $s \in (d-2,d)$, corresponding to the weakly long-ranged regime, the equilibrium measure always retains a non-trivial bulk component and thus is never completely confined to the wall. In contrast, for $s \in (d-3,d-2]$, corresponding to the strongly long-ranged regime, a confinement transition occurs: there exists a critical wall position beyond which the equilibrium measure is supported entirely on the wall. Our theorem generalises recent results established for the Coulomb gas, corresponding to the case $s=d-2$.
\end{abstract}

\maketitle


\section{Introduction}

Interacting particle systems arise naturally across a broad range of mathematics and physics \cite{CDFR14,ST97,Fo10}, including random matrix theory, approximation theory, statistical mechanics, and condensed matter physics. Despite their diverse origins, these systems share a unifying feature: macroscopic behaviour emerges from pairwise interactions among a large number of particles. A central objective is therefore to understand how the qualitative properties of the resulting collective states are determined by the underlying interaction law.

Among the many interaction laws that have been studied, long-range interactions occupy a particularly prominent position owing to their intrinsically nonlocal nature. Unlike short-range interactions, particles separated by macroscopic distances continue to exert a significant influence on one another, giving rise to genuinely collective phenomena that are absent in systems with only local interactions.  
Beyond the classical distinction between short-range and long-range
systems, it is also natural to ask whether long-range interactions
themselves exhibit a richer qualitative classification. 
In particular, one may ask whether different forms of nonlocal interactions produce fundamentally distinct collective behaviours, and which mathematical properties of the interaction kernel govern these differences.

These questions are naturally addressed within the framework of Riesz
gases \cite{Lew22,Se24}. For positive integers \(N\) and \(d\), let $X_N=(x_1,\ldots,x_N)\in(\mathbb{R}^d)^N$ be a point configuration.  
The $N$–particle Hamiltonian associated with Riesz gases is given by 
\begin{equation}  \label{def of Hamiltonian}
\mathcal{H}_N(X_N) := \sum_{1 \le i \ne j \le N} g_s(x_i - x_j) 
+ N \sum_{i=1}^N V(x_i),
\end{equation}
where $V:\mathbb{R}^d\to(-\infty,+\infty)$ is a confining potential and  
\begin{equation}
\label{def of interaction g}
g_s(x)  := \begin{cases}
\frac{ 1 }{s} |x|^{-s}, & s \ne 0, 
\smallskip 
\\
-\log|x|, & s = 0 , 
\end{cases} \qquad x \in \R^d. 
\end{equation} 
The interaction is conventionally classified as long-ranged when its
tail is non-integrable at infinity, and short-ranged otherwise. For the
Riesz kernel $g_s$, this distinction is determined precisely by the relation between the interaction exponent \(s\) and the dimension \(d\):
the regime \(s<d\) is long-ranged, and \(s>d\) is short-ranged; see Figure~\ref{fig:interaction-regimes}.  
We note that several normalisation conventions for the interaction kernel \eqref{def of interaction g} are used in the literature. For instance, in \cite{Lew22}, the kernel is defined as $\sgn(s)|x|^{-s}$. Our convention is chosen so as to make the continuity at \(s=0\) transparent, in view of the replica trick type limit
\begin{equation}
-\log |x|= \lim_{s\to 0} s^{-1}( |x|^{-s}-1 ) .
\end{equation}

By definition, the Riesz gas at inverse temperature $\beta>0$ follows the distribution 
\begin{equation} \label{def of Riesz gas Gibbs}
\ud \mathbb{P}_{N,\beta}(X_N) = \frac{1}{Z_{N,\beta}} \exp\Big(-\frac{\beta}{2} N^{-\frac{s}{d}} \mathcal{H}_N(X_N) \Big) \ud X_N. 
\end{equation} 
It is well known that the empirical measure $\frac{1}{N} \sum_{j=1}^N \delta_{x_j}$ converges weakly, as $N\to\infty$, to the unique probability measure $\mu_V$ that minimises the energy functional
\begin{equation} \label{def of energy functional}
I_V[\mu] := \iint_{ (\R^d)^2} g_s(x-y) \ud\mu(x)\ud\mu(y) + \int_{ \R^d } V(x) \ud\mu(x).   
\end{equation}
The minimiser $\mu_V$ is referred to as the equilibrium measure. 

Several important and extensively studied models arise as particular
instances of the Riesz gas. When \(s=d-2\), the interaction reduces to
the classical Coulomb potential, and the corresponding system is known
as the Coulomb gas. When \(s=0\), the interaction becomes logarithmic,
yielding the log gas. In addition, when $d=1$, the log gas coincides with the joint eigenvalue distribution of Hermitian random matrix ensembles 
\cite{Fo10}. On the other hand, when $d=2$, the logarithmic interaction is
equivalent to the Coulomb interaction, and the resulting two-dimensional
Coulomb (or log) gas is closely connected with non-Hermitian random
matrix ensembles, most notably the complex Ginibre ensemble
\cite{BF25}. We also refer the reader to the recent review~\cite{BF25a} for an electrostatic perspective on random matrix theory.

\smallskip 
 
Owing to its simple formulation and its ability to interpolate
continuously between different interaction regimes, the Riesz gas
provides a natural framework for investigating how the interaction
range shapes the collective behaviour of many-particle systems. This
question has been investigated extensively through large-deviation
principles, fluctuation theory, and other probabilistic aspects of the model; see, e.g., \cite{Se24,HLSS18,LS17,LS25,LS25a} and references therein. 

We approach this problem from a geometric
viewpoint by investigating the structure of constrained equilibrium
configurations. 
Our aim is to understand how the geometry of equilibrium states depends on the interaction exponent~$s$, and whether it exhibits qualitative transitions that distinguish different long-range interaction regimes beyond the classical long-range/short-range dichotomy.  
To this end, we investigate the equilibrium measure of a Riesz gas confined to
a half-space. More precisely, our main question is the following:
\begin{center}
\emph{How does the equilibrium state of a Riesz gas change when it is confined to a half-space?}
\end{center} 
For logarithmic interactions in
dimensions $d=1$ and $d=2$, this problem admits a natural probabilistic
interpretation in terms of large-deviation events for extremal
eigenvalues of random matrices. 
For $d=1$, the constrained equilibrium problem and its relation
to extreme eigenvalue statistics have been studied extensively; see, e.g., 
\cite{DM06,DM08,MNSV09,BDG01,KC10,RKC12,MV09,FW12,MS14,MNSV11}. For $d=2$, the problem is closely related to the distribution of the
largest real part of the eigenvalues of the Ginibre ensemble. This observable plays a central role in May's celebrated stability criterion for large ecosystems~\cite{May72}; see \cite{BLO26,XZ24} and the references therein.

\begin{figure}[t]
    \centering
    \begin{tikzpicture}[
        x=1.75cm,
        y=1cm,
        >=stealth,
        every node/.style={font=\normalsize},
        open point/.style={
            circle,
            draw=black,
            fill=white,
            line width=0.9pt,
            inner sep=2.2pt
        },
        coulomb point/.style={
            circle,
            draw=red!80!black,
            fill=red!80!black,
            inner sep=2.3pt
        }
    ]


    \coordinate (A) at (0,0);    
    \coordinate (B) at (2.0,0);  
    \coordinate (C) at (6.0,0);  
    \coordinate (D) at (9.3,0);  


    \draw[->,semithick]
        (A) -- (D)
        node[right=7pt] {$s$};


    \draw[
        red!80!black,
        line width=2.2pt
    ]
        ($(A)+(0.05,0)$) -- (B);

    \draw[
        blue!75!black,
        line width=2.2pt
    ]
        ($(B)+(0.05,0)$) -- ($(C)+(-0.05,0)$);

    \draw[
        black!65,
        line width=2.2pt
    ]
        ($(C)+(0.05,0)$) -- ($(D)+(-0.15,0)$);


    \node[open point] at (A) {};
    \node[coulomb point] at (B) {};
    \node[open point] at (C) {};


    \node[
        text=red!80!black,
        align=center,
    ]
        at ($(A)!0.5!(B)+(0,0.80)$)
        {strongly\\long-range};

    \node[
        text=blue!75!black,
        align=center,
    ]
        at ($(B)!0.5!(C)+(0,0.80)$)
        {weakly\\long-range};

    \node[
        text=black!70,
        align=center,
    ]
        at ($(C)!0.5!(D)+(0,0.55)$)
        {short-range};


    \node[below=8pt]
        at (A)
        {$d-3$};

    \node[
        below=8pt,
        text=red!80!black
    ]
        at (B)
        {$d-2$ (Coulomb)};

    \node[below=8pt]
        at (C)
        {$d$};

    \end{tikzpicture}

    \caption{
        Interaction regimes for the Riesz exponent \(s\).
    }
    \label{fig:interaction-regimes}
\end{figure}

The main contribution of the present paper is the analysis of a sharp dichotomy within the long-ranged regime. Our main findings can be summarised as follows. 
\begin{itemize}
    \item In the \textbf{weakly long-ranged regime}, where \(s\in(d-2,d)\), the Riesz gas never becomes completely confined to the hard wall, no matter how far the gas is pushed; see Figure~\ref{Fig_weakly}.
    
    \smallskip

    \item In the \textbf{strongly long-ranged regime}, where \(s\in(d-3,d-2]\), there exists a 
    critical value \(a_{\rm cri}\) 
    beyond which the Riesz gas becomes completely confined to the hard wall; see Figure~\ref{Fig_strongly}.
\end{itemize}
This extends recent results for the Coulomb case \cite{DKMSS17,ASZ14,BFMS26,FIT26}. We discuss our results more precisely in the next section.

\begin{figure}[t]
\centering
\setlength{\tabcolsep}{3pt}
\renewcommand{\arraystretch}{1.1}

\begin{tabular}{
|>{\centering\arraybackslash}m{0.30\textwidth}
|>{\centering\arraybackslash}m{0.30\textwidth}
|>{\centering\arraybackslash}m{0.34\textwidth}|
}
\hline
\cellcolor{blue!3}
$a=0$
&
\cellcolor{blue!3}
$a=1$
&
\cellcolor{blue!3}
$a=2$
\\
\hline

\includegraphics[
width=\linewidth,
height=4.8cm,
keepaspectratio
]{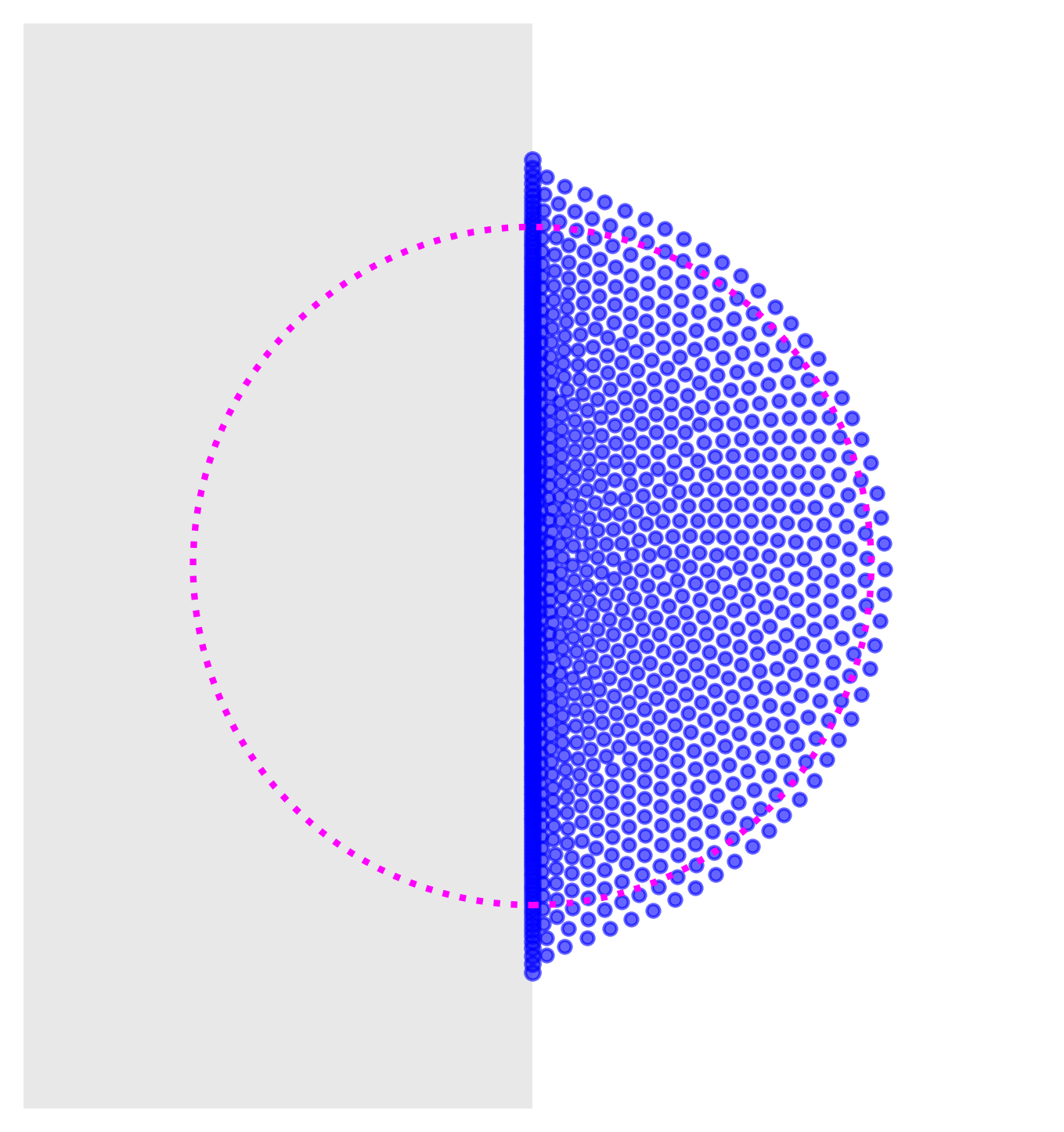}
&
\includegraphics[
width=\linewidth,
height=4.8cm,
keepaspectratio
]{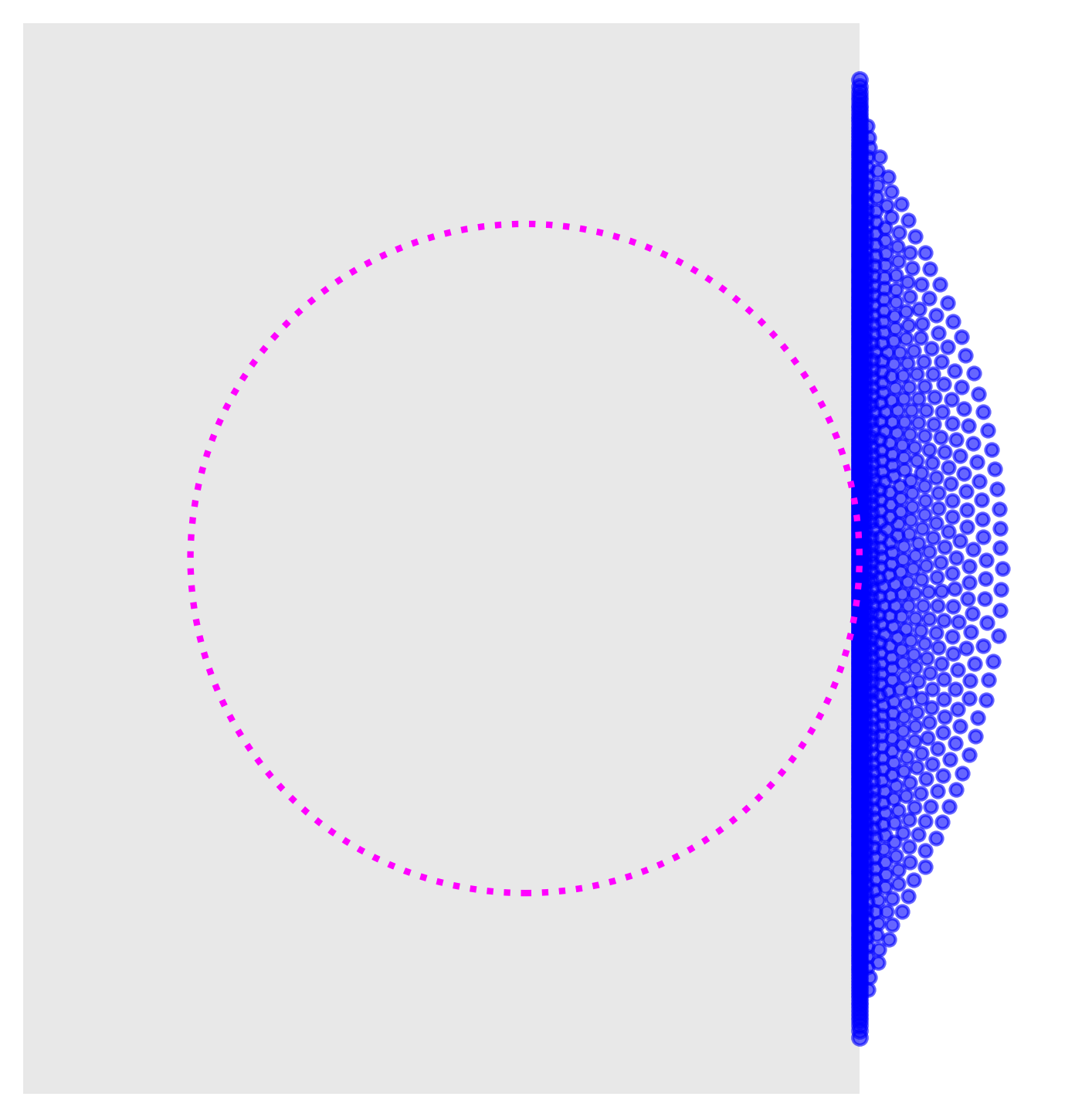}
&
\includegraphics[
width=\linewidth,
height=4.8cm,
keepaspectratio
]{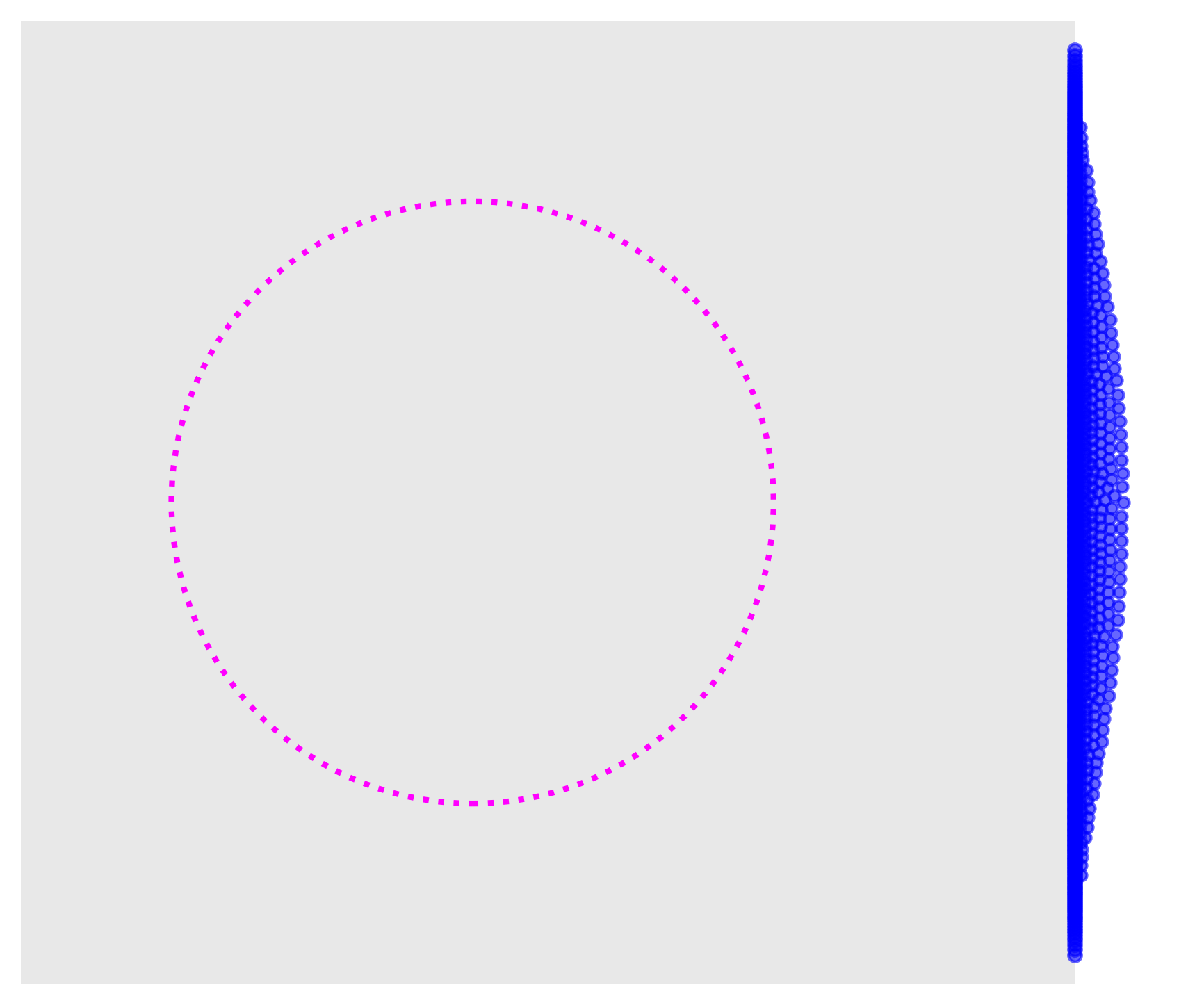}
\\
\hline
\end{tabular}

\caption{Point configurations of the minimiser of \eqref{def of Hamiltonian}, obtained numerically via gradient descent for $N=1000$. Here, $d=2$ and $s=\frac{1}{2}$, corresponding to the weakly long-ranged regime. The shaded region represents the hard wall, while the dashed circle indicates the boundary of the unconstrained equilibrium measure. 
} 
\label{Fig_weakly}
\end{figure}

\begin{figure}[t]
\centering
\setlength{\tabcolsep}{4pt}
\renewcommand{\arraystretch}{1.1}

\begin{tabular}{
|>{\centering\arraybackslash}m{1.5cm}
|>{\centering\arraybackslash}m{0.28\textwidth}
|>{\centering\arraybackslash}m{0.28\textwidth}
|>{\centering\arraybackslash}m{0.28\textwidth}|
}
\hline
\cellcolor{blue!6}
&
\multicolumn{2}{c|}{\cellcolor{blue!3}\(a<a_{\rm cri}\)}
&
\cellcolor{blue!3}\(a=a_{\rm cri}\)
\\
\hline

\cellcolor{blue!3}
\(\displaystyle s=0\)
&
\includegraphics[width=\linewidth,height=4.8cm,keepaspectratio]{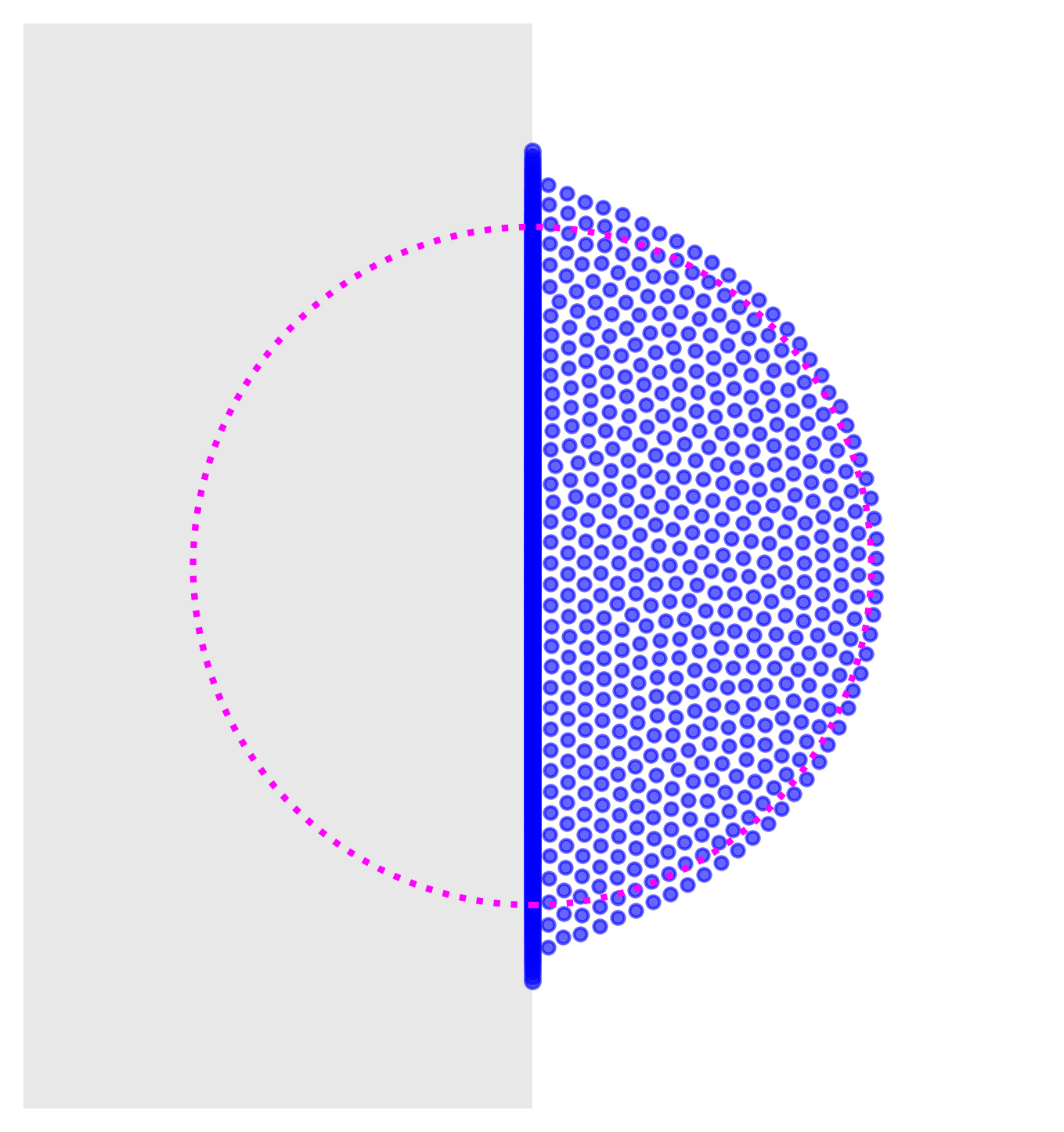}
&
\includegraphics[width=\linewidth,height=4.8cm,keepaspectratio]{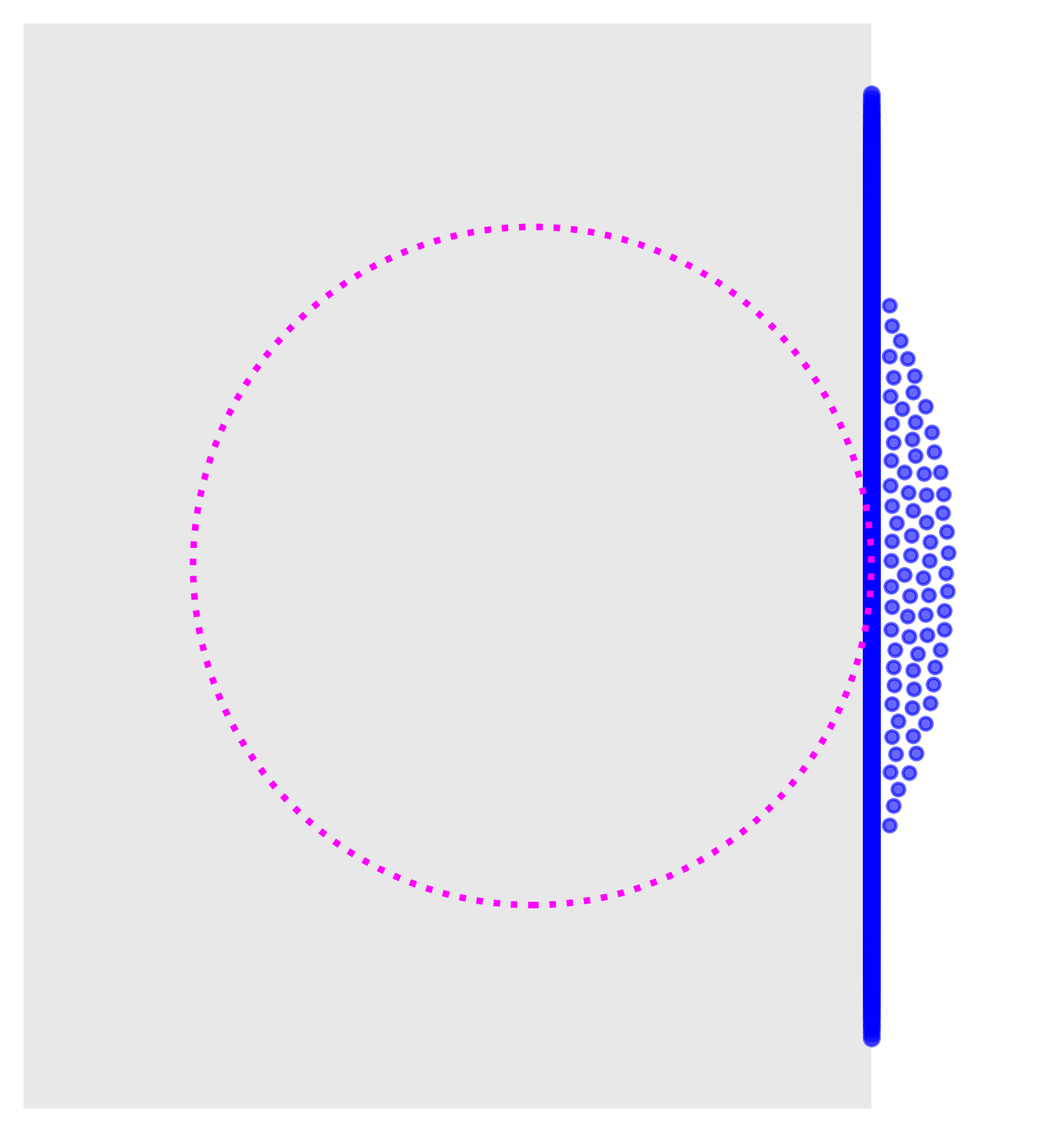}
&
\includegraphics[width=\linewidth,height=4.8cm,keepaspectratio]{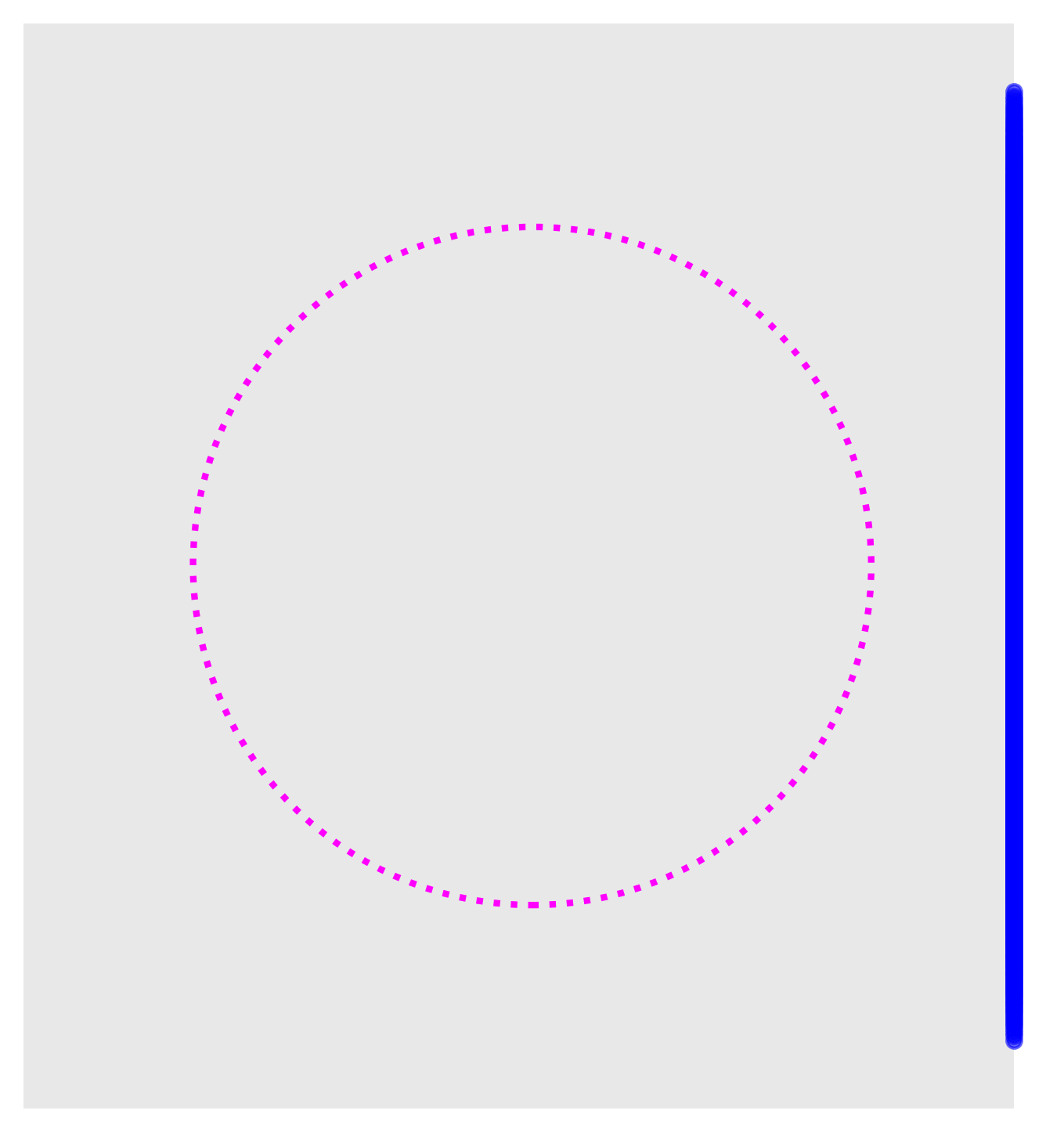}
\\
\hline

\cellcolor{blue!3}
$s=-\frac35$
&
\includegraphics[width=\linewidth]{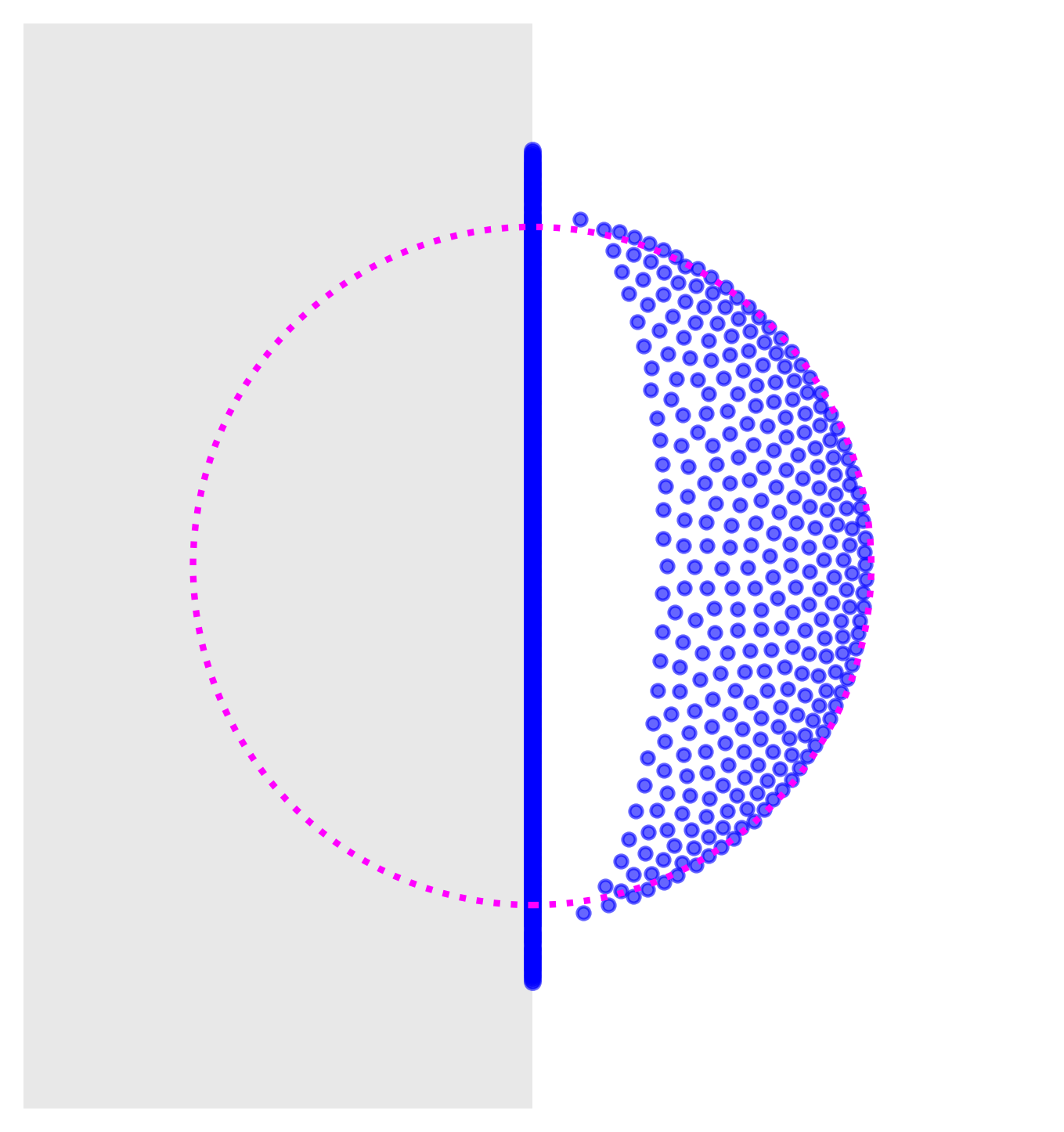}
&
\includegraphics[width=\linewidth]{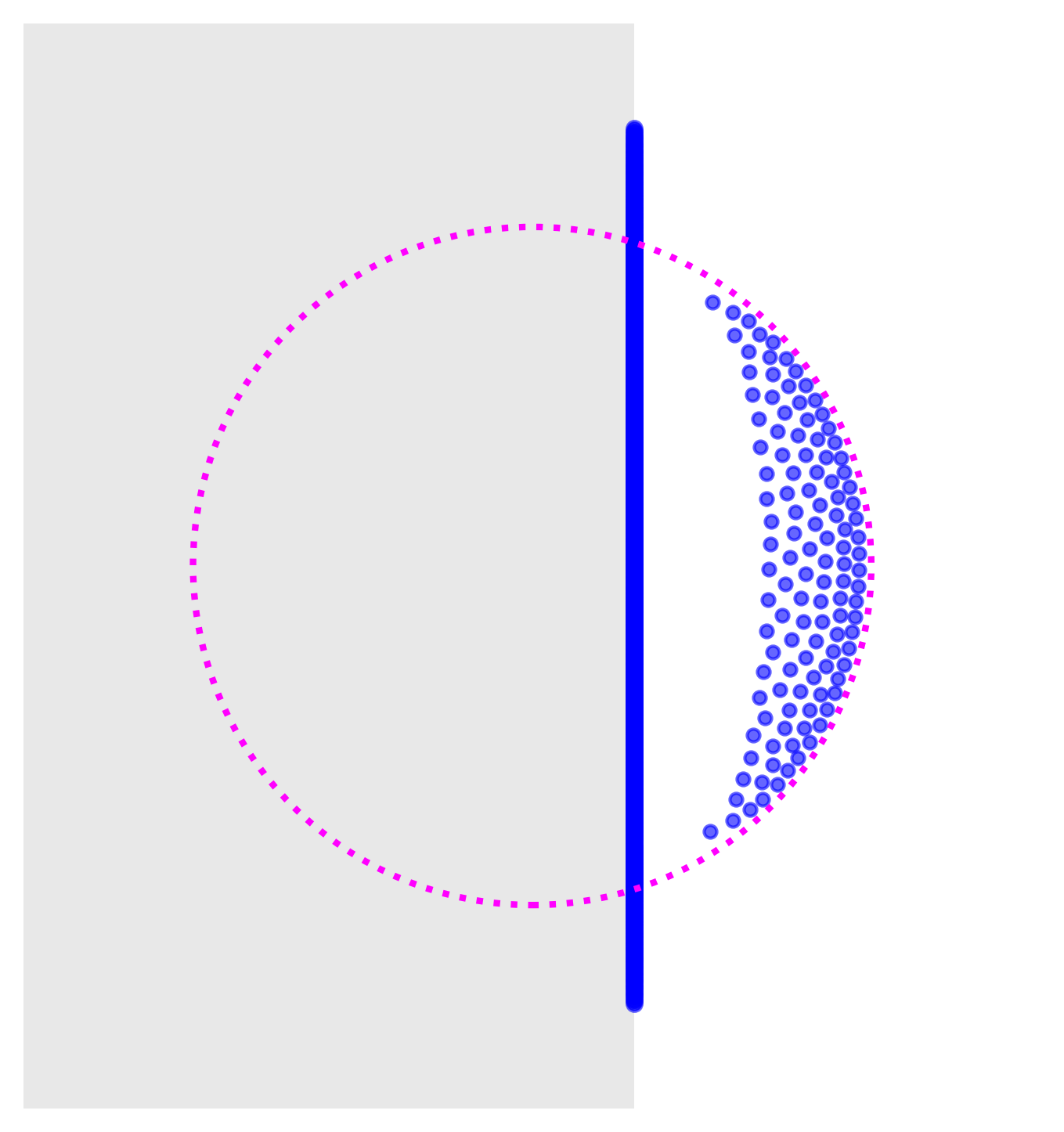}
&
\includegraphics[width=\linewidth]{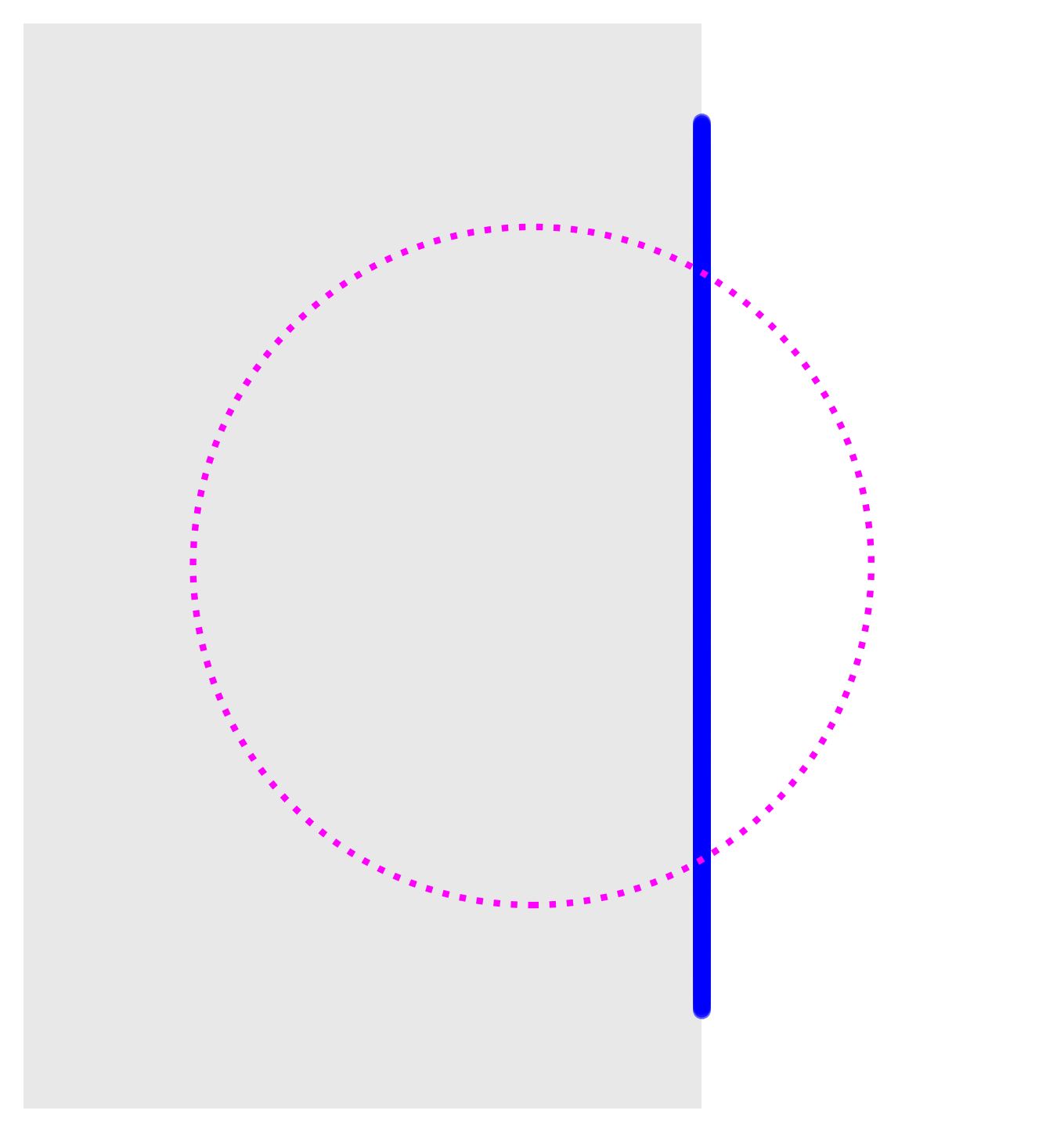}
\\
\hline

\end{tabular}
\caption{Same as Figure~\ref{Fig_weakly}, but in the strongly long-ranged regime, with $s=0$ and $s=-\frac{3}{5}$. As $a$ increases, the particles progressively accumulate on the hard wall, becoming fully confined to it at the critical value $a=a_{\rm cri}(d,s)$. Here, $a_{\rm cri}(d,s)=\sqrt{2}$ for $s=0$, while $a_{\rm cri}(d,s)\approx 0.5$ for $s=-\frac{3}{5}$. 
} \label{Fig_strongly}
\end{figure}

\newpage 

\section{Main results}

Let us now formulate the problem more precisely.
Although the Riesz gas can be defined for all \(s>-2\), throughout this paper we restrict our attention to the range
\begin{equation}
    s\in(d-3,d).
\end{equation}
The restriction \(s<d\) excludes the short-range regime, which does not exhibit
the phenomena that are the focus of the present work. The lower bound
\(s>d-3\) is imposed for a different reason. Indeed, for \(s<d-3\) one expects
the emergence of additional phases, whose analysis requires different
techniques and lies beyond the scope of this paper; see the discussion following Proposition~\ref{Prop_eq msr unconstrained}. We also note that, in the mathematical literature on long-ranged interactions, and in particular on Riesz gases, most existing results (see e.g. \cite{Se24,Fo26}) concern the weakly long-ranged regime \(s\in[d-2,d)\). 

We consider the quadratic confining potential
\begin{equation}\label{def of quad potential no constraint}
  V(x)= \mathsf{C}_{d,s}   |x|^2, \qquad
  \mathsf{C}_{d,s} 
  =
  \frac{\Gamma(\frac{d-s}{2})\Gamma(\frac{s}{2}+2)}
       {\Gamma(\frac d2+1)},
\end{equation}
for which the associated equilibrium measure is supported on the
unit ball. More generally, for an arbitrary constant $\mathsf{C}>0$,
the equilibrium measure is given by the following proposition.

\begin{prop}[\textbf{Unconstrained equilibrium measure associated with quadratic potentials}] \label{Prop_eq msr unconstrained}
Let $d\ge1$, $s\in(d-3,d)$, and let
$V(x)=\mathsf{C}|x|^2$ with $\mathsf{C}>0$.
Then the associated equilibrium measure is given by 
\begin{equation} \label{def of eq msr unconstraint}
    \ud \mu_V(x)= \frac{\Gamma(\frac{s}{2}+2)}{\Gamma(2-\frac{d-s}{2}) }\Big(1- \frac{|x|^2}{ \mathsf{R}^2}\Big)^{ 1-\frac{d-s}{2} } \cdot \mathbbm{1}_{|x|\le \mathsf{R} } \frac{ \ud x }{ (\sqrt{\pi} \, \mathsf{R})^d },   
\end{equation}
where 
\begin{equation} \label{def of radius R unconstraint}
    \mathsf{R} =\bigg(\frac{\Gamma(\frac{d-s}{2})\Gamma(\tfrac{s}{2}+2) }{\Gamma(\frac{d}{2}+1)\mathsf{C}}\bigg)^{\frac{1}{s+2}} . 
\end{equation} 
\end{prop}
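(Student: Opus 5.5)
The plan is to verify the Frostman (Euler--Lagrange) conditions for the candidate measure. By strict convexity of $I_V$ on probability measures when $s\in(d-3,d)$ — the kernel $g_s$ is conditionally positive definite there, since its Fourier transform is a positive multiple of $|\xi|^{s-d}$ on mean-zero test measures — the minimiser is unique. It is characterised by
\begin{equation*}
2\int g_s(x-y)\ud\mu(y)+V(x)\;\begin{cases} = \ell, & x\in\operatorname{supp}\mu,\\ \ge \ell, & \text{otherwise},\end{cases}
\end{equation*}
where the factor $2$ reflects the double integral in $I_V$. So it suffices to check these two relations for the measure \eqref{def of eq msr unconstraint}. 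By the scaling $x\mapsto \mathsf R x$ we may take $\mathsf R=1$: the potential of the rescaled measure picks up a factor $\mathsf R^{-s}$, while $V$ picks up $\mathsf R^2$. Equating the coefficients of $|x|^2$ then gives $\mathsf R^{s+2}=\text{const}/\mathsf C$, and this is where \eqref{def of radius R unconstraint} will come from.

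Next I compute the Riesz potential of $(1-|x|^2)_+^{\alpha}$ with $\alpha=1-\frac{d-s}{2}$. The relevant tool is the classical Dyda / Biler--Imbert--Karch formula. It states that $(-\Delta)^{\gamma/2}(1-|x|^2)_+^{\gamma/2+k}$ is, inside the ball, an explicit polynomial of degree $2k$, with constant
\begin{equation*}
\frac{2^{\gamma}\Gamma(\frac{\gamma}{2}+1+k)\Gamma(\frac{d+\gamma}{2})}{\Gamma(1+k)\Gamma(\frac d2)}\,{}_2F_1\big(\cdot\big).
\end{equation*}
Here I use it in its inverse form for the Riesz potential $|x|^{-s}*\cdot$, which is $(-\Delta)^{-(d-s)/2}$ up to a constant. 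Equivalently, I apply the known identity
\begin{equation*}
\int_{|y|<1}\frac{(1-|y|^2)^{\alpha}}{|x-y|^{s}}\ud y = A - B|x|^2, \qquad |x|\le1,\ \alpha=1-\tfrac{d-s}{2}.
\end{equation*}
I would prove this identity directly by Fourier transform: the Fourier transform of $(1-|y|^2)_+^{\alpha}$ is a Bessel function $J_{d/2+\alpha}(|\xi|)/|\xi|^{d/2+\alpha}$, and multiplying by $|\xi|^{s-d}$ reduces the question to Weber--Schafheitlin discontinuous integrals, which give exactly a quadratic polynomial inside the ball. The requirement $\alpha>-1$ is precisely $s>d-4$; the stronger hypothesis $s>d-3$ is used below for the exterior inequality. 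The constant $B$, after combining with the normalisation $\Gamma(\frac s2+2)/(\Gamma(2-\frac{d-s}{2})\pi^{d/2})$ that makes $\mu_V$ a probability measure (a Beta integral check), must match $\mathsf C$. This identifies $\mathsf C_{d,s}$, giving \eqref{def of quad potential no constraint} and \eqref{def of radius R unconstraint}. The case $s=0$ follows by the limit $s\to0$, or by the same computation with $-\log$.

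The main obstacle is the inequality outside the support, $|x|>1$. Write $U(x)=2\int g_s(x-y)\ud\mu(y)$, which is radial, and set $f(r)=U(r)+\mathsf C r^2$. I need $f'(r)\ge0$ for $r>1$. I would try first to use the fact that, for $\alpha\ge0$ (i.e.\ $s\ge d-2$), $U$ is superharmonic outside the ball, so the Coulomb-type comparison is immediate. For $s\in(d-3,d-2)$, where $\alpha\in(-\frac12,0)$ and the density blows up at the boundary, I would instead argue from the explicit exterior formula: the exterior potential is a hypergeometric function in $1/r^2$. I would show that $f'(r)$ vanishes at $r=1$ (by $C^1$ matching, valid because $\alpha>-\frac12$ ensures the potential is $C^1$ across the sphere) and that $r^{-1}f'(r)$ is increasing, using monotonicity of the hypergeometric series with positive coefficients. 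The condition $s>d-3$ is exactly what is needed for this $C^1$ matching and sign, consistent with the remark that new phases appear for $s<d-3$.
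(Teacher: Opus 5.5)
Your overall plan (Frostman characterisation, scaling to $\mathsf R=1$, the classical interior identity, then a first-derivative argument for $|x|>1$) is sound. The step that handles the new regime $s\in(d-3,d-2)$, however, is justified by a mechanism that fails there.

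Differentiating the exterior formula $U(r)=\frac{2}{s}r^{-s}{}_2F_1(\frac s2,1-\frac{d-s}{2};\frac s2+2;r^{-2})$ termwise gives
\[
\frac{U'(r)}{r}=-2r^{-s-2}\sum_{k\ge0}\frac{(\frac s2+1)_k\,(1-\frac{d-s}{2})_k}{(\frac s2+2)_k\,k!}\,r^{-2k},\qquad r>1 .
\]
Since $1-\frac{d-s}{2}\in(-\frac12,0)$, every coefficient with $k\ge1$ is negative. The $k=0$ term increases in $r$, but all the others decrease. The coefficients are positive exactly when $s\ge d-2$, which is not where you need them. Likewise, $f'(1^+)=0$ is only asserted through a regularity heuristic, not proved.

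Both points are fixed by one identity. By Euler's integral,
\[
\frac{f'(r)}{r}=2\mathsf{C}_{d,s}-(s+2)\int_0^{r^{-2}}v^{\frac s2}(1-v)^{\frac{d-s}{2}-1}\ud v ,\qquad r>1 .
\]
This is strictly increasing in $r$ because the integrand is positive. As $r\downarrow1$ it tends to $2\mathsf{C}_{d,s}-(s+2)\mathrm{B}(\frac s2+1,\frac{d-s}{2})=0$. Hence $f'>0$ on $(1,\infty)$, with no appeal to $C^1$ regularity and no use of your condition $\alpha>-\frac12$.

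The identity does not depend on the sign of $1-\frac{d-s}{2}$, so it also covers $s\ge d-2$. Your separate argument for that range has the sign backwards: $\Delta g_s=(s+2-d)|x|^{-s-2}\ge0$, so $U$ is subharmonic, not superharmonic, off the support. With the correct sign, $r^{d-1}f'(r)$ is nondecreasing and the conclusion still follows.

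Once repaired, your route differs from the paper's essentially only in this last step. The paper obtains the same exterior formula via Funk--Hecke and a quadratic transformation. It then applies the connection formula at $t=|y|^{-2}=1$ to write $U+V$ minus the Robin constant as an explicitly positive multiple of $(1-t)^{1+\frac{d-s}{2}}t^{\frac s2}{}_2F_1(2,1+\frac d2;2+\frac{d-s}{2};1-t)$, checking the signs with the reflection formula. Your first-order argument needs only termwise differentiation, an Euler integral and a complete Beta value, and it is uniform in $s$. In return, the paper's factorisation shows the exact order of contact $(1-t)^{1+\frac{d-s}{2}}$ at the edge of the support.
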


In the special case $s=0$, Proposition~\ref{Prop_eq msr unconstrained}
reduces to the classical Wigner semicircle law for $d=1$ and the
circular law for $d=2$. The one-dimensional case of Proposition~\ref{Prop_eq msr unconstrained} was established in \cite{ADKKMMS19}, while the higher-dimensional cases $s=d-3$ and $s\in[d-2,d)$ were obtained in
\cite{CSW22} and \cite{BFMS26}, respectively. The normalisation adopted
in \cite{BFMS26} is chosen so that the equilibrium measure is supported
on the unit ball; see Theorems~2.4 and~2.6 together with Remark~8
therein. For completeness, we include a proof of the remaining case
$s\in(d-3,d-2)$. 

In fact, the restriction \(s>d-3\) is essential for Proposition~\ref{Prop_eq msr unconstrained}. When \(s<d-3\), the interaction becomes even more long-ranged, and the equilibrium measure in the unconstrained setting is expected to differ from \eqref{def of eq msr unconstraint}. In particular, the analysis of \cite{CSW23} in the case \(s=d-4\) suggests that the equilibrium measure develops a singular component supported on the surface of the sphere; see also \cite{CMSVW25}. This gives rise to additional phases, whose characterisation requires a separate analysis. 

\medskip 

We now turn to the case where the Riesz gas is confined to a half-space. To formulate the constrained problem, let \(a\in\mathbb{R}\) and define the external potential by
\begin{equation} \label{def of Va constrained potential}
  V_a(x):= 
    \begin{cases}
        V(x), &x_d\ge a,
        \smallskip 
        \\
        +\infty, & \text{otherwise},
    \end{cases}
\end{equation} 
where $V$ is given by \eqref{def of quad potential no constraint}. 
Note that by \eqref{def of Riesz gas Gibbs}, the infinite potential on the region $\{x_d<a\}$ forces all particles to remain in the half-space $\{x_d\ge a\}$. (Note that this situation is fundamentally different from the
well-studied balayage setting arising in the study of hole
probabilities and related statistics
\cite{AR17,Ad18,Ch23a,Ch23,BP26,CW25,CFLV17,CFLV19,AZ15}, where the hard wall
lies strictly inside the support of the equilibrium measure.)

In the sequel, we use notation $x= (x_1,\ldots, x_{d-1},x_d) =(\hat{x}, x_d)$. 
We denote by $\wh \mu_a \equiv \mu_{ V_a }$ the associated equilibrium measure.  
Due to this half-space constraint, the equilibrium is in general not absolutely continuous with respect to the Lebesgue measure in $\R^d$. Consequently, we write 
\begin{equation} 
    \ud\wh{\mu}_a(x) = \ud \nu_a(\hat{x})\otimes \delta_a(x_d) + \ud\mu_a(x). 
\end{equation}

For Coulomb gases ($s=d-2$), constrained equilibrium measures $\wh{\mu}_a$ have been the subject of several recent studies.  One of the most interesting phenomena, first investigated for dimensions
$d=1,2$ in \cite{DKMSS17,ASZ14}, is the existence of a critical wall
location beyond which the equilibrium measure becomes completely confined to
the boundary. More precisely, when the wall is placed beyond the
critical value, the singular component $\nu_a$ carries the entire mass.
This phenomenon was subsequently investigated in arbitrary dimensions
in \cite{BFMS26}, where it was proposed that the critical threshold is
given by \eqref{def of critical value Coulomb} below.   
To be specific, \cite{BFMS26} established one direction of the
Euler--Lagrange characterisation (and a complete proof in dimension
$d=4$). Very recently, the remaining direction was proved in
\cite{FIT26}, thereby confirming that this complete confinement
phenomenon indeed occurs for Coulomb gases in every dimension.

While the works \cite{DKMSS17,ASZ14,BFMS26,FIT26} provide a comprehensive understanding of the complete confinement transition for Coulomb gases, they leave open a more fundamental question: is complete confinement a universal feature of long-range interactions, or is it a genuinely interaction-dependent phenomenon? Since the Coulomb interaction corresponds to the single value \(s=d-2\) within the Riesz family, the existing results alone cannot answer this question. Indeed, the extensively studied one-dimensional log gas exhibits no complete confinement transition, suggesting that qualitatively different behaviour may arise beyond the Coulomb case. 

Our main result reveals that the interaction exponent $s=d-2$ marks a
fundamental threshold within the long-range regime. More precisely, the
geometry of the constrained equilibrium measure exhibits a sharp dichotomy; cf. Figures~\ref{fig:interaction-regimes}, ~\ref{Fig_weakly} and ~\ref{Fig_strongly}. 

To describe the phase transition, we first introduce the critical value.
For this purpose, we recall that the incomplete beta function
\cite[Section~8.17]{NIST} is defined by
\begin{equation} \label{def of incomplete beta}
\mathrm{B}_z(p,q)
:=
\int_0^z
u^{p-1}(1-u)^{q-1}\,\ud u,
\qquad
(\re p>0)
\end{equation} 
and extended to general parameters by analytic continuation. 

\begin{defn}{Critical value} Let $d \ge 1$ and $s \in (d-3,d-2)$. 
Let \(\mathsf{x}\) be the unique nonnegative solution of
\begin{equation}
\label{eq:critical point incomplete beta}
\mathsf{x}^{-2\alpha}
    (1+\mathsf{x}^2)^{\frac{3-d}{2}}
    +
    \Big(
        \frac{s}{2}
        -
        \frac{\alpha}{\mathsf{x}^2}
    \Big)
    \mathrm{B}_{\frac{\mathsf{x}^2}{1+\mathsf{x}^2}}
    \Big(
        \frac{d-s-1}{2},
        \frac{s}{2}
    \Big)
    =
    \mathsf{C}_{d-1,s},
\end{equation}
where $\mathrm{B}$ is the incomplete beta function \eqref{def of incomplete beta} and 
\begin{equation} \label{def of alpha}
 \alpha := \frac{s-d+3}{2}. 
\end{equation} 
Then the critical value is defined by \begin{equation}
\label{eq:a critical incomplete beta}
    a_{\rm cri}(d,s)
    :=
 \frac{1}{R^{s+1}}    \frac{
        \Gamma(\frac{d}{2}+1)
    }{
        \Gamma(\frac{d-s}{2})
        \Gamma(\frac{s}{2}+1)
    }  \, 
    \frac{1}{\mathsf{x}}
    \mathrm{B}_{\frac{\mathsf{x}^2}{1+\mathsf{x}^2}}
    \Big(
        \frac{d-s-1}{2},
        \frac{s}{2}
    \Big),
\end{equation}
where \begin{equation} \label{def of R}
    R
    :=\Big(\frac{\mathsf{C}_{d-1,s}}{\mathsf{C}_{d,s}}\Big)^{\frac{1}{s+2}}=
    \bigg(
    \frac{
    \Gamma(\tfrac{d-s-1}{2})
    \Gamma(\tfrac{d}{2}+1)
    }{
    \Gamma(\tfrac{d-s}{2})
    \Gamma(\tfrac{d+1}{2})
    }
    \bigg)^{\frac{1}{s+2}}.
\end{equation}
\end{defn}

\begin{figure}[t]
    \centering
    \includegraphics[width=0.6\linewidth]{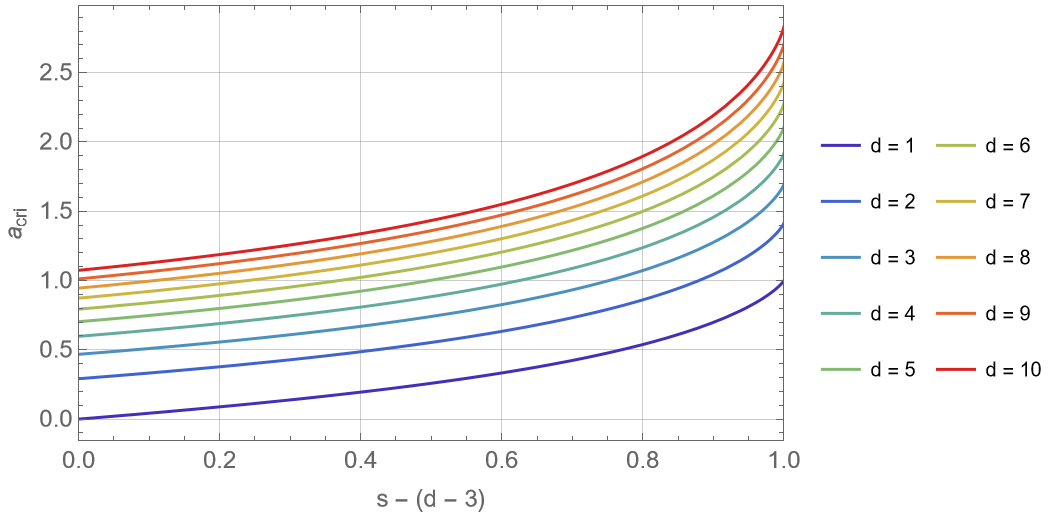}
  \caption{Graphs of the function $s\mapsto a_{\rm cri}(d,s)$ for $d=1,\dots,10$. The horizontal axis represents $s-(d-3)$, which ranges from $0$ to $1$ for every $d$.}
    \label{fig: a-cri}
\end{figure}

See Figure~\ref{fig: a-cri} for the graph of $a_{\rm cri}(d,s)$.
One observes that $s \mapsto a_{\rm cri}(d,s)$ is increasing.
This monotonicity is intuitively natural: as the interaction becomes more
long-ranged, the influence of the hard wall propagates over larger
distances, making complete confinement possible even when the wall is
located farther away. See Remark~\ref{Rem_support bdy} for further
discussion. 

The existence and uniqueness of the nonnegative solution to
\eqref{eq:critical point incomplete beta} will be established in
Lemma~\ref{lem: property f(x)}.  
We note that the extremal case $s=d-2$ requires separate treatment, for which we define $\mathsf{x}=0$. The corresponding critical value is then obtained by continuity as the limit $s\uparrow d-2$, yielding the explicit formula \eqref{def of critical value Coulomb}; see Remark~\ref{Rem_critical special} for further details.

We are now ready to state our main result.

\begin{thm}[\textbf{Interaction-range dichotomy for full confinement}] \label{thm: fully confined}
Let $d\geq 1$ and $s\in(d-3,d)$. For $a \in \R$, let $\wh \mu_a $ be the equilibrium measure associated with the potential \eqref{def of Va constrained potential}.  

\begin{enumerate}[label=\textup{(\roman*)}]
\item \label{thm: fully confined weak}
\textup{\textbf{Weakly long-ranged regime:}}
Suppose that $s\in(d-2,d)$. Then $\widehat{\mu}_a$ is never fully confined to the hard wall (i.e., $ |\nu_a|<1$ for all $a \in \R$). 

\smallskip

\item
\textup{\textbf{Strongly long-ranged regime:}} \label{thm: fully confined strong}
Suppose that $s\in(d-3,d-2]$. Then $\widehat{\mu}_a$ is fully confined to the hyperplane $\{x_d=a\}$ (i.e., $  |\nu_a|=1$) if and only if $a\ge a_{\mathrm{cri}}(d,s)$.  
Furthermore, in this case, we have 
\begin{equation}
    \ud\widehat{\mu}_a(x)
    =
    \ud \nu_a(\hat{x})\otimes \delta_a(x_d),
\end{equation}
where
\begin{equation} \label{def of fully singular msr}
   \ud\nu_a(\hat{x})
    =  \frac{
    \Gamma(2+\tfrac{s}{2})
    }{
    \Gamma(\tfrac{5+s-d}{2})
    }
    \Big(
    1-\frac{|\hat{x}|^2}{R^2}
    \Big)^{ \alpha }
    \mathbbm{1}_{|\hat{x}|\le R}
    \frac{\ud\hat{x}}
    {(\sqrt{\pi}R)^{d-1}}.  
\end{equation}
Here  $\alpha$ and $R$ are given by \eqref{def of alpha} and \eqref{def of R}, respectively.  
\end{enumerate}
\end{thm}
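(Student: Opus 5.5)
The plan is to run everything through the Frostman (Euler--Lagrange) characterisation of the constrained minimiser. Write $U^\mu(x)=\int g_s(x-y)\ud\mu(y)$. Since $I_{V_a}$ is strictly convex for $s\in(d-3,d)$, because $\wh{g_s}$ is positive in the relevant distributional sense, $\wh\mu_a$ is the unique probability measure on $\{x_d\ge a\}$ satisfying
\begin{equation*}
2U^{\wh\mu_a}(x)+V(x)=\ell \ \text{on } \operatorname{supp}\wh\mu_a,\qquad 2U^{\wh\mu_a}(x)+V(x)\ge \ell \ \text{on } \{x_d\ge a\}
\end{equation*}
for some constant $\ell$. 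Full confinement therefore reduces to a single question: is the candidate $\nu\otimes\delta_a$, with $\nu$ the $(d-1)$-dimensional equilibrium measure, admissible? For a measure carried by the hyperplane we have $V(\hat x,a)=\mathsf{C}_{d,s}|\hat x|^2+\mathsf{C}_{d,s}a^2$, and $g_s$ restricted to $\R^{d-1}$ is the Riesz kernel with the same $s$. Note $s\in(d-3,d-2]=((d-1)-2,(d-1)-1]$ lies in the range where Proposition~\ref{Prop_eq msr unconstrained} applies in dimension $d-1$. Taking $\mathsf{C}=\mathsf{C}_{d,s}$ there gives the radius $R=(\mathsf{C}_{d-1,s}/\mathsf{C}_{d,s})^{1/(s+2)}$ and exponent $1-\frac{(d-1)-s}{2}=\alpha$, which is exactly \eqref{def of fully singular msr}. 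The equality condition on the support is therefore automatic, and what remains is the inequality off the hyperplane.

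For $x=(\hat x,a+t)$ with $t>0$, define
\begin{equation*}
F(\hat x,t):=2U^{\nu\otimes\delta_a}(\hat x,a+t)+\mathsf{C}_{d,s}\big(|\hat x|^2+(a+t)^2\big)-\ell.
\end{equation*}
Full confinement holds iff $F\ge0$ for all $t>0$ (and all $\hat x$, including $|\hat x|>R$, where the $(d-1)$-dimensional inequality already gives $F(\hat x,0)\ge0$). The first step is to compute the Riesz potential of the radial beta-type density $(1-|\hat y|^2/R^2)^\alpha$ at height $t$ above its hyperplane. I plan to do this via the Fourier/Poisson-type representation or via the known formula for Riesz potentials of such densities (hypergeometric functions). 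On the axis $\hat x=0$ it reduces to an incomplete beta integral in the variable $u=\mathsf{x}^2/(1+\mathsf{x}^2)$ with $\mathsf{x}=R/t$. This is where \eqref{eq:critical point incomplete beta} should come from. The function $\partial_t F(0,t)$ at $t=0^+$ carries the sign of the normal-derivative jump, proportional to $a\,\mathsf{C}_{d,s}$ minus the repulsive push of the layer. The threshold is decided by whether $\inf_{t>0}F(0,t)/t\ge 0$; the minimiser of $t\mapsto[\text{potential part}](t)/t$ gives the equation for $\mathsf{x}$ (Lemma~\ref{lem: property f(x)} supplies uniqueness), and the resulting minimal value yields $a_{\rm cri}$ in \eqref{eq:a critical incomplete beta}. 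Because $F$ is affine-increasing in $a$ (the $a$-dependence enters only through $2\mathsf{C}_{d,s}at$), the condition holds iff $a\ge a_{\rm cri}$.

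The step I expect to be the main obstacle is reducing the check from all $\hat x$ to the axis $\hat x=0$. I would try to show that $F(\hat x,t)-F(0,t)\ge 0$, i.e.\ that the off-plane potential minus the in-plane one is minimised on the axis. My first attempt is to use superharmonicity/subharmonicity in the $\hat x$-variables of $g_s$ for $s\le d-3$ after slicing. Failing that, I would write $F(\hat x,t)=\int_0^t\partial_\tau F\,d\tau$ and show that $\partial_\tau F(\hat x,\tau)$ is minimised at $\hat x=0$ using the explicit hypergeometric form.

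For part (i), $s\in(d-2,d)$, I would show that $\partial_tF(\hat x,0^+)=-\infty$ on the support of $\nu$. The $(d-1)$-dimensional equilibrium measure (now with $s>(d-1)-1$, possibly beyond Proposition~\ref{Prop_eq msr unconstrained}'s range, so one uses a general density with $\alpha<1/2$, or simply any fully confined candidate) has a potential whose vertical variation behaves like $t^{d-1-s}$ with $d-1-s<1$, or like $-c\,t$ with infinite $c$. In either case it dominates the linear term $2\mathsf{C}_{d,s}at$, so the inequality fails for every $a$. A cleaner route is to estimate $g_s(\hat x-\hat y,t)-g_s(\hat x-\hat y,0)$ and show that its integral against $\nu$ is $\lesssim -t^{d-1-s}$ whenever $\nu$ has positive density, which the $(d-1)$-dimensional Frostman conditions force near a point of its support. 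This rules out $|\nu_a|=1$.
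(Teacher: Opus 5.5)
Your architecture matches the paper's. You reduce full confinement to the Euler--Lagrange conditions for the candidate $\nu\otimes\delta_a$, with $\nu$ given by Proposition~\ref{Prop_eq msr unconstrained} in dimension $d-1$. On the axis you write $F(0,t)-F(0,0)$ as $t$ times an $a$-linear term minus a function of $t/R$, whose supremum produces $a_{\rm cri}$; note that the variable matching \eqref{eq:critical point incomplete beta} is $\mathsf{x}=t/R$, not $R/t$. For (i) you exhibit a negative $t^{d-s-1}$ term that beats the linear term.

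The genuine gap is the step you flag yourself: passing from the axis $\hat x=0$ to all $\hat x$. Your first idea is not available as stated. In $\R^{d-1}$ the restricted kernel $g_s$ is superharmonic only for $s\le (d-1)-2=d-3$, which is exactly the excluded range. For $t>0$ the regularised kernel satisfies
\[
\Delta_{\hat y}\,\tfrac1s\big(|\hat y|^2+t^2\big)^{-s/2}=\big(|\hat y|^2+t^2\big)^{-\frac s2-2}\big[(s+3-d)|\hat y|^2-(d-1)t^2\big],
\]
which changes sign for $s\in(d-3,d-2]$. So neither sub- nor superharmonicity holds at the kernel level, and you give no argument at the level of $F$. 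Your fallback, integrating $\partial_\tau F(\hat x,\tau)$ and showing it is smallest at $\hat x=0$, has the right structure. However, ``using the explicit hypergeometric form'' supplies no mechanism: off the axis the potential of $\nu$ has no tractable closed form, and this step is the heart of the $d\ge2$ case.

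The missing idea is a rearrangement fact. For $t>0$,
\[
\partial_t\Big[\frac{2}{s}\int_{\R^{d-1}}\frac{\ud\nu(y)}{(|\hat x-y|^2+t^2)^{s/2}}\Big]=-(\phi*\psi_t)(\hat x),\qquad \phi=\frac{\ud\nu}{\ud y},\quad \psi_t(y)=\frac{2t}{(|y|^2+t^2)^{\frac s2+1}}.
\]
Both $\phi$ (because $\alpha>0$, i.e.\ $s>d-3$) and $\psi_t$ (because $s>-2$) are nonnegative, radial and radially non-increasing. The convolution of two such functions is again radial and radially non-increasing; this is a lemma of Lieb, and by the layer-cake representation it reduces to convolving indicators of centred balls. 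Hence $(\phi*\psi_t)(\hat x)\le(\phi*\psi_t)(0)$. Since $V(\hat x,a+t)-V(\hat x,a)$ does not depend on $\hat x$, integrating in $t$ gives $F(\hat x,t)-F(\hat x,0)\ge F(0,t)-F(0,0)\ge 0$. The in-plane inequality $F(\hat x,0)\ge F(0,0)$ then finishes the proof.

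A smaller point: for $s\in[d-1,d)$ your part (i) argument is muddled, because there is no $(d-1)$-dimensional equilibrium measure to perturb. The clean reason is that every probability measure on the hyperplane has infinite $s$-energy when $s\ge d-1$, while finite-energy competitors exist in the half-space.
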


As previously mentioned, the numerical results illustrating Theorem~\ref{thm: fully confined} (i) are presented in Figure~\ref{Fig_weakly}, while those corresponding to Theorem~\ref{thm: fully confined} (ii) are shown in Figure~\ref{Fig_strongly}.

We conclude this section with several remarks concerning Theorem~\ref{thm: fully confined}. 

\begin{rem}[Coulomb case]
Note that, in the Coulomb case $s=d-2$, Theorem~\ref{thm: fully confined} recovers the previously established results of
\cite{DKMSS17,ASZ14,BFMS26,FIT26}.
Indeed, the critical value \eqref{eq:a critical incomplete beta} simplifies to\footnote{Our convention differs slightly from that of
\cite{BFMS26}. There, a $(d+1)$-dimensional Coulomb gas is confined to a
$d$-dimensional hyperplane, whereas here we consider a
$d$-dimensional Riesz gas confined to a $(d-1)$-dimensional
hyperplane.}
\begin{equation} \label{def of critical value Coulomb}
 a_{\rm cri}(d,d-2)
 = d \bigg( \frac{ \Gamma(\frac{d+1}{2})  }{ \sqrt{\pi} \, \Gamma(\frac{d}{2}+1)   } \bigg)^{ \frac{d-1}{d} }.
\end{equation}
This is precisely the critical value introduced in \cite{BFMS26}.
Furthermore, the equilibrium measure \eqref{def of fully singular msr} reduces to that given in \cite[Eq.~(2.54)]{BFMS26}. Consequently,
\cite[Theorem~2.7]{BFMS26} and \cite[Theorem~1.2 (c)]{FIT26} are recovered as special cases of Theorem~\ref{thm: fully confined}.
\end{rem}

\begin{rem}[One-dimensional Riesz case; metastable regime] The equilibrium measure for the one-dimensional Riesz gas under a half-space constraint was studied in detail in \cite{KKKMMS21}; see also \cite{KKKMMS22,KKKMMS24}. 

The authors of \cite{KKKMMS21} adopt a different normalisation, considering the quadratic potential $x^2/|s|$. This differs from our convention \eqref{def of quad potential no constraint},
and consequently, the support of the unconstrained equilibrium measure in \cite{KKKMMS21} has a different radius.
The authors then introduced an analytic threshold
\cite[Eq.~(18)]{KKKMMS21}\footnote{More precisely, the formula in \cite[Eq.~(18)]{KKKMMS21} differs by an overall minus sign, since the authors consider a hard-wall constraint on the right half-line, whereas our convention imposes the constraint on the left half-line.}
\begin{equation}
    w_c(s):=\frac{(s+2)\,|s(s+1)|^{\frac{1}{s+2}}}{|s+1|},
\end{equation}
and showed that, whenever the wall position $w$ satisfies $w>w_c(s)$, the constrained equilibrium measure is fully confined to the hard wall, namely, it reduces to a Dirac mass. 
Furthermore, based on numerical simulations, the authors conjectured the existence of a genuine critical value $w^*(s)<w_c(s)$ such that complete confinement occurs if and only if $w>w^*(s)$.

Our results confirm this conjecture and, moreover, provide an explicit analytic expression for $w^*(s)$. To see this, we first observe that, in the one-dimensional case $d=1$, the critical value \eqref{eq:a critical incomplete beta} simplifies to
\begin{equation}  \label{def of critical value d=1}
a_{\rm cri}(1,s) = \frac{s+2}{2|s+1|}\bigg(\frac{s+1}{s}\frac{\sqrt{\pi}}{\Gamma(\frac{1-s}{2})\Gamma(\frac{s}{2}+2)}\bigg)^{\frac{1}{s+2}}. 
\end{equation}   
After an appropriate rescaling, we obtain, for $s\in(-2,-1)$, 
\begin{equation}\label{eq:true w^*(s)}
    w^*(s) = (|s|\mathsf{C}_{1,s})^{\frac{1}{s+2}}a_{\rm cri}(1,s) = (s+2)(2|s+1|)^{-\frac{s+1}{s+2}}.  
\end{equation}
In particular, one readily verifies that $w^*(s) < w_c(s)$, thereby confirming the conjectured strict inequality.  

In fact, these values admit a clear analytic interpretation. In \cite{KKKMMS21}, the value $w_c(s)$ arises from the saddle-point analysis, or equivalently, from the Euler--Lagrange variational equality. More precisely, $w_c(s)$ is the threshold beyond which a measure with both singular and absolutely continuous components can no longer satisfy the Euler--Lagrange equality. However, the Euler--Lagrange equality alone is not sufficient to characterise the equilibrium measure; one must also verify the corresponding Euler--Lagrange variational inequality. As we prove in the present paper, the value $w^*(s)$ is precisely the unique threshold at which this variational inequality is first satisfied. Consequently, $w^*(s)$, rather than $w_c(s)$, is the true critical value for the confinement transition.
(We note that an analogous phenomenon also arises in the two-dimensional Coulomb gas, albeit for a different reason, namely a change in the topology of the candidate equilibrium measures; see \cite{BY26}.) 

The authors of \cite{KKKMMS21} interpreted the regime $w^*(s)<w<w_c(s)$ as \emph{metastable}, since both the Dirac mass on the hard wall and a mixed measure satisfy the Euler--Lagrange equality and therefore correspond to stationary points of the weighted Riesz energy functional. This also explains the numerical instability in approximating the equilibrium measure. 
For instance, when $s=-1.5$, the exact formula \eqref{eq:true w^*(s)} yields $w^*(s)=0.5$, while the numerical estimate reported in \cite{KKKMMS21} is approximately $0.441$. 
\end{rem} 

\begin{rem}[Complete confinement before the support boundary] \label{Rem_support bdy}
In the strongly long-ranged regime, where a complete confinement transition occurs, several additional interesting features emerge.

We first consider the Coulomb case $s=d-2$. One can notice that the critical value \eqref{def of critical value Coulomb} is increasing in the spatial dimension \(d\). Moreover, as discussed in \cite[Remark~12]{BFMS26}, it diverges as $d\to\infty$. From the perspective of statistical physics, this monotonicity is quite natural. In higher dimensions, particles have greater freedom to spread out, so a stronger hard-wall effect is required to confine the entire gas.

Another noteworthy property is that $a_{\rm cri}(d,d-2)\ge1,$ with equality attained only in the one-dimensional case $d=1$. Since our normalisation is chosen so that the unconstrained equilibrium measure is supported on the unit ball (see \eqref{def of eq msr unconstraint}), this inequality implies that, in the Coulomb case, the hard wall must be pushed beyond the boundary of the unconstrained support before complete confinement can occur. For example, in the first row of Figure~\ref{Fig_strongly}, corresponding to $d=2$, the critical value is $\sqrt{2}$, meaning that the wall must be moved past the boundary of the circular law before all particles become confined.

On the other hand, for interactions that are more long-ranged than the Coulomb interaction, the effect of the hard wall is expected to be stronger. Consequently, complete confinement may occur even when the wall is located strictly inside the support of the unconstrained equilibrium measure. For example, the second row of Figure~\ref{Fig_strongly} illustrates this phenomenon: one can see that the particles become fully confined even before the hard wall reaches the rightmost point of the disc. 
Motivated by this observation, for each fixed dimension $d$, one may define the value $s_*=s_*(d)$ by
\begin{equation}
a_{\rm cri}(d, s_*)=1.
\end{equation}
Then, for $s<s_*(d)$, the interaction is sufficiently long-ranged that complete confinement occurs before the wall reaches the boundary of the unconstrained support. Our numerical computations suggest that such a threshold exists for $d\le8$; see Table~\ref{tab:s-cri}. (Recall that the quantity $s_*-(d-3)$ necessarily belongs to the interval $(0,1]$.) Note that this threshold corresponds to the horizontal cross-section of the surface in Figure~\ref{fig: a-cri} at the level $a_{\rm cri}=1$.

On the other hand, in sufficiently high dimensions, the particles have enough freedom to spread that, throughout the strongly long-ranged regime $s\in(d-3,d-2]$, one always needs to move the wall beyond the radius of the unconstrained equilibrium support in order to achieve complete confinement.
\end{rem}

 \begin{table}[t]
    \centering
    \renewcommand{\arraystretch}{1.5}
    \begin{tabular}{K{1.8cm} K{1.2cm} K{1.2cm} K{1.2cm} K{1.2cm} K{1.2cm} K{1.2cm} K{1.2cm} K{1.2cm}}
        \hline
        \rowcolor{gray!10}
        $d$
            & $1$ & $2$ & $3$ & $4$ & $5$ & $6$ & $7$ & $8$ \\
        \hline
        \cellcolor{gray!10}$s_*$
            & $-1.0000$
            & $-0.1204$
            & $0.7520$
            & $1.6249$
            & $2.4974$
            & $3.3688$
            & $4.2390$
            & $5.1078$ \\
        \hline
        \cellcolor{gray!10}$s_*-(d-3)$
            & $1.0000$
            & $0.8796$
            & $0.7520$
            & $0.6249$
            & $0.4974$
            & $0.3688$
            & $0.2390$
            & $0.1078$ \\
        \hline
    \end{tabular}
    \caption{The exponent \(s_*\) satisfying \(a_{\rm cri}(d,s_*)=1\).}
    \label{tab:s-cri}
\end{table}

\subsection*{Plan of the paper} The rest of this paper is organised as follows. In Section~\ref{Section_prelim}, we review the necessary preliminaries, establish several auxiliary lemmas, and prove Proposition~\ref{Prop_eq msr unconstrained}. Section~\ref{sec: proof of main thm} is devoted to the proof of Theorem~\ref{thm: fully confined}.

\subsection*{Acknowledgements} The authors were supported by the National Research Foundation of Korea grant (RS-2025-00516909, RS-2026-25518141). We thank Peter J. Forrester, Satya N. Majumdar, and Gregory Schehr for their interest and helpful discussions during the initiation and preparation of this paper.

\section{Preliminaries} \label{Section_prelim}

In this section, we prove Proposition~\ref{Prop_eq msr unconstrained} and establish several auxiliary lemmas that will be used in the next section.

We first recall the Euler--Lagrange variational conditions characterising the equilibrium measure. By Frostman's theorem~\cite{Fr35} (see also \cite[Theorem~2]{DOSW23}), the equilibrium measure $\mu_V$ satisfies 
\begin{equation}
\label{eq:EL condition}
    2\int_{\R^d} g_s(x-y) \ud \mu_V(x) +V(y) \begin{cases}
        = c, & y \in \operatorname{supp} \mu_V,
        \smallskip 
        \\
        \ge c, &y \notin \operatorname{supp}\mu_V,
    \end{cases}
\end{equation}
where the constant $c$ is referred to as the (modified) Robin constant. Conversely, once the existence of a minimiser of the Riesz energy functional~\eqref{def of energy functional} is established, any probability measure satisfying~\eqref{eq:EL condition} necessarily coincides with the unique equilibrium measure $\mu_V$. 

We present the proof of Proposition~\ref{Prop_eq msr unconstrained}.

\begin{proof}[Proof of Proposition~\ref{Prop_eq msr unconstrained}]
As noted after the statement of Proposition~\ref{Prop_eq msr unconstrained}, it suffices to establish the case $d \ge 2$ and $s\in(d-3,d-2)$, which does not appear to be covered in the existing literature. In fact, the proof of~\cite{BFMS26} for $s\in[d-2,d)$ carries over with only minor modifications. In particular, the prefactor $\frac{1}{s}$ in the Riesz kernel $g_s$ must be retained, since $s$ may be negative when $d=2$.
 
Let $d\ge2$ and $s\in(d-3,d-2)$. We show that the measure $\mu_V$ defined in~\eqref{def of eq msr unconstraint} is a probability measure and satisfies the Euler--Lagrange variational conditions~\eqref{eq:EL condition}.  
Recall that the surface area of the unit sphere $\mathbb{S}^{d-1}\subset\mathbb{R}^d$ is given by 
\begin{equation}
\label{eq:surface area}
    |\mathbb{S}^{d-1}| = \frac{2\pi^{d/2}} {\Gamma(\frac{d}{2})}.
\end{equation}
Using polar coordinates and Euler's Beta integral, we readily see that $\mu_V$ is a probability measure:
\begin{align*}
    \int_{\R^{d}} \ud\mu_V(x) 
    &= \frac{\Gamma(\frac{s}{2}+2)}{\Gamma(2-\frac{d-s}{2})\Gamma(\frac{d}{2})} \int_0^1 2r^{d-1}(1-r^2)^{1-\frac{d-s}{2}} \ud r\\
    &= \frac{\Gamma(\frac{s}{2}+2)}{\Gamma(2-\frac{d-s}{2})\Gamma(\frac{d}{2})}\int_0^1 u^{\frac{d}{2}-1}(1-u)^{1-\frac{d-s}{2}} \ud u=1. 
\end{align*}

We note that it suffices to verify the Euler--Lagrange conditions for $\mathsf{C}=\mathsf{C}_{d,s}$, or equivalently, for $\mathsf{R}=1$, as the general case follows by scaling. 
To make the scaling argument precise, let us temporarily set $\mathsf{V}(x) =\mathsf{C}_{d,s}|x|^2$ as in \eqref{def of quad potential no constraint}, and suppose that the measure $\mu_{\mathsf{V}}$ given by~\eqref{def of eq msr unconstraint} satisfies the Euler--Lagrange conditions associated with $\mathsf{V}$. For an arbitrary $\mathsf{C}>0$, it follows from~\eqref{def of eq msr unconstraint} and~\eqref{def of radius R unconstraint} that
\[
    \frac{\ud\mu_V}{\ud x}(x) =\frac{1}{\mathsf{R}^d}\frac{\ud\mu_\mathsf{V}}{\ud x}(\mathsf{R}^{-1}x), \quad \mathsf{C}= \mathsf{R}^{-s-2} \mathsf{C}_{d,s}.
\]
Consequently, upon making the change of variables $x=\mathsf{R}x'$, we obtain 
\begin{align*}
    \frac{2}{s}\int_{\R^d} \frac{\ud \mu_V(x)}{|x-y|^s} +V(y) &= \frac{2}{s}\int_{\R^d} \frac{\ud\mu_\mathsf{V}(x')}{|\mathsf{R} x'-y|^s} + \mathsf{C} |y|^2 = \mathsf{R}^{-s}\bigg(\frac{2}{s}\int_{\R^d} \frac{\ud\mu_\mathsf{V}(x')}{|x'- \mathsf{R}^{-1}y|^s} + \mathsf{C}_{d,s}\Big|\frac{y}{\mathsf{R}}\Big|^2\bigg)
    \begin{cases}
        =c, &|y|\le \mathsf{R},\\
        \ge c, & |y|>\mathsf{R},
    \end{cases}
\end{align*}
for some constant $c$, where the last step follows from the Euler--Lagrange conditions for $\mu_{\mathsf{V}}$ with respect to $\mathsf{V}$. Hence, $\mu_V$ satisfies the Euler--Lagrange conditions for $V$ and is therefore the equilibrium measure associated with $V$.

Thus, it suffices to establish the Euler--Lagrange conditions in the normalised case $\mathsf{C}= \mathsf{C}_{d,s}$ and $\mathsf{R}=1$. 
As in~\cite{CSW22,CSW23,BFMS26}, the starting point is to reduce the $d$-dimensional integral to a one-dimensional one. For this, we apply the Funk--Hecke formula; see, e.g.,~\cite[Appendix~A.2]{CSW22}. This formula is valid for $s\neq 0$, which covers $s\in(d-3,d-2)$ when $d\ge 3$, while for $d=2$, the remaining case $s=0$ corresponds to the classical Coulomb case. Proceeding analogously to~\cite[Lemma~3.1]{BFMS26}, we obtain 
\begin{equation}
\label{eq:Riesz potential hypergeometric}
    \frac{2}{s}\int_{\R^d}\frac{\ud\mu_\mathsf{V}(x)}{|x-y|^s} =\frac{2}{s} \frac{\Gamma(\frac{s}{2}+2)}{\Gamma(2-\frac{d-s}{2})\Gamma(\frac{d}{2})}
    \int_0^1 {}_2F_1(\tfrac{s}{2},\tfrac{d-1}{2};d-1;\tfrac{4r|y|}{(|y|+r)^2})\frac{(1-r^2)^{1-\frac{d-s}{2}} }{(|y|+r)^s} (2r^{d-1})\ud r. 
\end{equation} 
Here, ${}_2F_1$ denotes the Gauss hypergeometric function \cite[Chapter 15]{NIST} defined by
\begin{equation}
\label{def of 2F1 ftn}
    {}_2F_1(a,b;c;z)
    :=
    \sum_{k=0}^{\infty}
    \frac{(a)_k(b)_k}{(c)_k}
    \frac{z^k}{k!},
    \qquad |z|<1,
\end{equation}
and by analytic continuation elsewhere, where $(a)_k:=\Gamma(a+k)/\Gamma(a)$ denotes the Pochhammer symbol.

In addition, it follows from~\cite{CSW22,BFMS26} or~\cite[Theorem~1.2]{GCO23} that, for $|y|<1$,
\begin{equation} \label{EL equality part unconstrained}
    \frac{2}{s}\int_{\R^d}\frac{\ud\mu_\mathsf{V}(x)}{|x-y|^s} =  \frac{\Gamma(\frac{s}{2}+2)\Gamma(\frac{d-s}{2})}{\frac{s}{2}\Gamma(\frac{d}{2})}
    {}_2F_1(\tfrac{s}{2}, -1; \tfrac{d}{2}; |y|^2)= \frac{\Gamma(\frac{s}{2}+2)\Gamma(\frac{d-s}{2})}{\frac{s}{2}\Gamma(\frac{d}{2})}-\mathsf{V}(y).
\end{equation}
Hence, the Euler--Lagrange equality holds throughout the support of the equilibrium measure.

To verify the Euler--Lagrange inequality, we make use of the quadratic
transformation~\cite[Eq.~(15.8.21)]{NIST} 
\begin{equation} \label{def of quad transform 2F1}
    {}_2F_1(\tfrac{s}{2},\tfrac{d-1}{2};d-1;\tfrac{4z}{(1+z)^2})= (1+z)^s {}_2F_1(\tfrac{s}{2}, 1-\tfrac{d-s}{2}; \tfrac{d}{2}; z^2), \qquad 0\le z<1.
\end{equation}  
For $|y|>1$, it follows from \eqref{def of quad transform 2F1} that
\begin{align}
\label{eq: Riesz potential |y|>1}
\begin{split}
    \frac{2}{s}\int_{\R^d}\frac{\ud\mu_\mathsf{V}(x)}{|x-y|^s} &=\frac{2}{s}  \frac{\Gamma(\frac{s}{2}+2)}{\Gamma(2-\frac{d-s}{2})\Gamma(\frac{d}{2})}|y|^{ -s }\int_0^1 {}_2F_1(\tfrac{s}{2}, 1-\tfrac{d-s}{2} ;\tfrac{d}{2}; \tfrac{r^2}{|y|^2}) (1-r^2)^{1-\frac{d-s}{2}} (2r^{d-1})\ud r\\
    &= \frac{2}{s} \frac{\Gamma(\frac{s}{2}+2)}{\Gamma(2-\frac{d-s}{2})\Gamma(\frac{d}{2})}|y|^{ -s}\int_0^1 {}_2F_1(\tfrac{s}{2}, 1-\tfrac{d-s}{2};\tfrac{d}{2}; \tfrac{u}{|y|^2}) (1-u)^{1-\frac{d-s}{2}} u^{\frac{d}{2}-1}\ud u\\
    &= \frac{2}{s}|y|^{ -s } {}_2F_1(\tfrac{s}{2},1-\tfrac{d-s}{2} ;\tfrac{s}{2}+2;|y|^{-2}),
\end{split}
\end{align}  
where we have used an integral representation of the hypergeometric function~\cite[Eq. (15.6.8)]{NIST} for the last identity. 

By setting $t=|y|^{-2}$ in~\eqref{eq: Riesz potential |y|>1},
the Euler--Lagrange inequality, in view of~\eqref{EL equality part unconstrained}, reduces to 
\begin{equation}
\label{eq:EL inequality reduced}
    \frac{2}{s}t^{ \frac{s}{2} } {}_2F_1(\tfrac{s}{2},1-\tfrac{d-s}{2} ;\tfrac{s}{2}+2;t) \ge \frac{\Gamma(\frac{s}{2}+2)\Gamma(\frac{d-s}{2})}{\frac{s}{2}\Gamma(\frac{d}{2})}
    {}_2F_1(\tfrac{s}{2}, -1; \tfrac{d}{2}; \tfrac{1}{t}), \qquad t \in (0, 1], 
\end{equation}
with equality if and only if $t=1$. Following the same steps as in~\cite[Lemma~3.5]{BFMS26}, the inequality~\eqref{eq:EL inequality reduced} is equivalent to 
\begin{equation}
    \frac{2}{s} \frac{\pi}{\sin(\tfrac{d-s}{2}\pi)}\frac{(1-t)^{1+\frac{d-s}{2}}}{\Gamma(\frac{s}{2})\Gamma(\tfrac{s-d}{2}+1)}
    t^{\frac{s}{2}}{}_2F_1(2, 1+\tfrac{d}{2}; 2+\tfrac{d-s}{2}; 1-t) >0,\qquad t\in (0,1).
\end{equation}
Since $s\in(d-3,d-2)$ and $d\ge2$, we have $s>-1$. By the reflection formula and the functional equation for the Gamma function, 
\begin{equation}
    \frac{\pi}{\sin(\frac{d-s}{2}\pi)}\frac{1}{\Gamma(\tfrac{s-d}{2}+1)}= \Gamma(\tfrac{d-s}{2}) >0, \qquad \frac{2}{s}\frac{1}{\Gamma(\frac{s}{2})}=\frac{1}{\Gamma(\tfrac{s}{2}+1)}>0.
\end{equation}
Moreover, since all the parameters  appearing in its defining series \eqref{def of 2F1 ftn} are positive,
\begin{equation}
    t^{\frac{s}{2}}{}_2F_1(2, 1+\tfrac{d}{2}; 2+\tfrac{d-s}{2}; 1-t) >0, \qquad t\in (0,1).
\end{equation}
It follows that the inequality \eqref{eq:EL inequality reduced} holds, which completes the proof. 
\end{proof}

The remainder of this section is devoted to the analysis of the auxiliary function $f:(0,\infty)\to\R$ defined by
\begin{equation}\label{def:f(x)}
    f(x) := x^{1-2\alpha}(1+x^2)^{-\frac{d-3}{2}} +\Big(\frac{s}{2}x +\frac{\alpha}{x}\Big) \mathrm{B}_{\frac{x^2}{1+x^2}}\Big(\frac{d-s-1}{2}, \frac{s}{2}\Big)-\mathsf{C}_{d-1, s}x, 
\end{equation}
where we recall that $\mathrm{B}_z$ denotes the incomplete beta function ~\eqref{def of incomplete beta}, while $\mathsf{C}_{d-1,s}$ and $\alpha$ are given by~\eqref{def of quad potential no constraint} and~\eqref{def of alpha}, respectively. 
Differentiating~\eqref{def:f(x)} and using~\eqref{def of incomplete beta},
we readily obtain, for $x>0$, 
\begin{align}
\label{eq:f'(x)}
\begin{split}
    f'(x) 
    &= x^{-2\alpha}(1+x^2)^{-\frac{d-3}{2}} + \Big(\frac{s}{2}-\frac{\alpha}{x^2}\Big)\mathrm{B}_{\frac{x^2}{1+x^2}}\Big(\frac{d-s-1}{2},\frac{s}{2}\Big)-\mathsf{C}_{d-1,s}.
\end{split}
\end{align}
Thus, $\mathsf{x}$ solves~\eqref{eq:critical point incomplete beta} if and only if $f'(\mathsf{x})=0$. 
We next establish the relevant properties of $f$ and, in particular, prove the existence and uniqueness
of the solution to~\eqref{eq:critical point incomplete beta} for
$s\in(d-3,d-2)$ by showing that $f$ has a unique global maximum on
$[0,\infty)$.

\begin{lem} \label{lem: property f(x)}
Assume that $s\in(d-3,d-1)$. The function $f$ defined in~\eqref{def:f(x)} satisfies the following properties: 
\begin{enumerate}[label=\textup{(\roman*)}]
    \item \label{item: limits f(x)}
  As $x\to\infty$, we have $f(x)\to-\infty$, whereas 
    \begin{equation}
         \lim_{x\to 0^+}f(x)=  
        \begin{cases}
            0, &\textup{if } s \in (d-3,d-2),
            \smallskip 
            \\
            2, &\textup{if } s= d-2,
            \smallskip 
            \\
            +\infty, &\textup{if } s\in (d-2,d-1).
        \end{cases}
    \end{equation}
    Consequently, $f$ attains a global maximum on $[0,\infty)$ for $s\in(d-3,d-2]$, whereas no finite global maximum exists for $s\in(d-2,d-1)$. 
    \smallskip 
    \item \label{item: maximum existence}
    In the strongly long-ranged regime $s\in(d-3,d-2]$, the function $f$ attains its unique global maximum at $x=\mathsf{x}$, where $\mathsf{x}$ is given by~\eqref{eq:critical point incomplete beta} for $s\in (d-3,d-2)$ and $\mathsf{x}=0$ for $s=d-2$.
    Moreover, the maximum value is given by
    \begin{equation}
    \label{eq:f(x) maximum characterisation}
        \max_{x\in [0,\infty)} f(x) =f(\mathsf{x})= \frac{2\alpha}{\mathsf{x}} \mathrm{B}_{\frac{\mathsf{x}^2}{1+\mathsf{x}^2}}\Big(\frac{d-s-1}{2},\frac{s}{2}\Big),
    \end{equation}
   where, in the case $\mathsf{x}=0$, the right-hand side is understood in
the limiting sense as $\mathsf{x}\to0^+$. 
\end{enumerate}
\end{lem}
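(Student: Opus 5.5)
The approach is to treat everything through the two parameters $p:=\frac{d-s-1}{2}$ and $q:=\frac{s}{2}$. In these variables $\alpha=1-p$, $p+q=\frac{d-1}{2}$, and $x^{1-2\alpha}=x^{2p-1}$. With $z(x)=\frac{x^2}{1+x^2}$, the chain rule gives
\begin{equation*}
\frac{\ud}{\ud x}\mathrm{B}_{z(x)}(p,q)=2x^{2p-1}(1+x^2)^{-\frac{d-1}{2}} ,
\end{equation*}
and this single identity drives all the computations. It already yields \eqref{eq:f'(x)}. It also yields the identity
\begin{equation*}
f(x)-xf'(x)=\frac{2\alpha}{x}\mathrm{B}_{z(x)}(p,q),
\end{equation*}
obtained by direct comparison of \eqref{def:f(x)} and \eqref{eq:f'(x)}: the $x^{2p-1}$-terms cancel, the $\mathsf{C}_{d-1,s}x$ terms cancel, and the $\frac{s}{2}x\mathrm{B}$ terms cancel. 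Hence at any critical point $f'(\mathsf{x})=0$ we get \eqref{eq:f(x) maximum characterisation}. So part \ref{item: maximum existence} reduces to showing that $f$ has a unique critical point in the strongly long-ranged regime, and that this point is a maximum.

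For the limit at $0^+$ in part \ref{item: limits f(x)}, use $\mathrm{B}_z(p,q)=\frac{z^p}{p}(1+O(z))$ with $z\sim x^2$. This gives
\begin{equation*}
f(x)=\Big(1+\frac{\alpha}{p}\Big)x^{2p-1}+o(x^{2p-1})=\frac{1}{p}\,x^{d-s-2}(1+o(1)).
\end{equation*}
This tends to $0$, to $2$ (since $p=\tfrac12$), or to $+\infty$ according as $s<d-2$, $s=d-2$, or $s>d-2$.

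The limit at infinity needs more care, because for $s<0$ the incomplete beta function diverges as $z\to1$. In that case the divergent part $\frac{(1-z)^q}{-q}\sim -\frac{2}{s}x^{-s}$, multiplied by $\frac{s}{2}x$, exactly cancels the first term $\sim x^{1-s}$. Rather than expand to higher order, I would argue through the derivatives. I will show that $f'$ is eventually bounded above by a negative constant, which then gives $f\to-\infty$ directly. The candidate value is $\lim_{x\to\infty}f'(x)=\frac{s}{2}\mathrm{B}(p,q)-\mathsf{C}_{d-1,s}$ when $s>0$, and the analogous regularised value when $s\le 0$; that value is then computed through Gamma functions and checked to be negative. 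The existence of a global maximum for $s\le d-2$ follows from the two limits and continuity, and its absence for $s>d-2$ follows from the blow-up at $0^+$.

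For uniqueness, I would compute $f''$ using the derivative identity. The expected outcome is a formula of the form
\begin{equation*}
f''(x)=\frac{2\alpha}{x^3}\,h(x),\qquad h(x)=\mathrm{B}_{z(x)}(p,q)-c\,x^{2p}(1+x^2)^{-\frac{d-3}{2}} ,
\end{equation*}
or something similar, with an explicit constant $c$. Since $\alpha>0$ for $s>d-3$, the goal is to show $h<0$ on $(0,\infty)$. This follows once $h(0^+)=0$ and $h'<0$ are established, and $h'$ is an elementary expression in which a factor of fixed sign can be read off. It then follows that $f$ is strictly concave, so $f'$ is strictly decreasing. Near $0^+$ we have $f'(x)\sim\frac{2p-1}{p}x^{2p-2}$, which tends to $+\infty$ when $s<d-2$, while $f'$ is eventually negative. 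Hence $f'$ has exactly one zero $\mathsf{x}$, and it is the unique global maximiser. When $s=d-2$ the same monotonicity gives $f'<0$ on $(0,\infty)$, so the supremum is approached only as $x\to0^+$, which justifies the convention $\mathsf{x}=0$ together with the limiting value $\frac{2\alpha}{\mathsf{x}}\mathrm{B}\to 2$.

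The main obstacle is getting the sign of $f''$: the computation has to collapse cleanly into one monotone auxiliary function. If it does not, I would instead examine $x^{2-2p}(1+x^2)^{\frac{d-1}{2}}f''(x)$ directly, and fall back on a hypergeometric representation of $\mathrm{B}_z(p,q)=\frac{z^p}{p}{}_2F_1(p,1-q;p+1;z)$ to control its sign.
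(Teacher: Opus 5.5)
Your part (i) is correct and follows the paper: the $0^+$ asymptotics, the Gamma-function evaluation $\lim_{x\to\infty}f'(x)=\mathsf{C}_{d-1,s}\frac{d-3-s}{s+2}<0$, and $f'(x)\sim\frac{2p-1}{p}x^{2p-2}$ near $0$. So is the identity $f(x)-xf'(x)=\frac{2\alpha}{x}\mathrm{B}_{z(x)}(p,q)$. The gap is in the uniqueness step, because $f$ is \emph{not} concave on $(0,\infty)$ in general. Carrying out the computation you sketch gives
\[
f''(x)=\frac{2\alpha}{x^3}\,h(x),\qquad h(x)=\mathrm{B}_{z(x)}(p,q)-2x^{2p}(1+x^2)^{-\frac{d-1}{2}}
\]
(the exponent is $-\frac{d-1}{2}$, not $-\frac{d-3}{2}$), and then
\[
h'(x)=2x^{2p-1}(1+x^2)^{-\frac{d+1}{2}}\big((s-d+2)+(s+1)x^2\big).
\]
The bracket has no fixed sign once $s>-1$, which is every case with $d\ge2$. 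In fact $h$ eventually becomes positive. For $s>0$ one has $h(x)\to\mathrm{B}(p,q)>0$. For $-1<s<0$ one has $h(x)\sim(\frac{2}{|s|}-2)x^{|s|}\to+\infty$. So $f''>0$ for large $x$, and $f'$ is not monotone. For $s=d-2$ the bracket equals $(d-1)x^2\ge0$, so $f$ is actually convex, the opposite of what you assert. Even granting your monotonicity there, a decreasing $f'$ that is eventually negative would not give $f'<0$ near $0$. Strict concavity holds only when $d=1$ and $s<-1$. Your fallback of controlling the sign of $x^{2-2p}(1+x^2)^{\frac{d-1}{2}}f''$ cannot work either, since that sign genuinely changes.

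The missing idea is to use only that $f''$ changes sign at most once, from negative to positive, together with the negative limit of $f'$. From the formula for $h'$ and $h(0^+)=0$ (with $h<0$ near $0$ when $s<d-2$), $h$ first decreases and then increases, so it has at most one zero $x_1$. On $[x_1,\infty)$, $f'$ is strictly increasing towards $\mathsf{C}_{d-1,s}\frac{d-3-s}{s+2}<0$, so $f'<0$ there. On $(0,x_1)$, $f'$ decreases strictly from $+\infty$, so it has exactly one zero $\mathsf{x}$, which is the unique global maximiser. For $s=d-2$, convexity together with the negative limit gives $f'<0$ on all of $(0,\infty)$. This is the paper's argument. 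There it is phrased through $\int_0^1u^{-\alpha}\big(\frac{1+x^2}{1+x^2u}\big)^{\frac{d-1}{2}}\ud u$, which is increasing in $x$ and is compared with $2$; your $h$ equals $x^{2p}(1+x^2)^{-\frac{d-1}{2}}$ times that function minus $2$.
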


By~\eqref{eq:f(x) maximum characterisation}, the critical value
$a_{\rm cri}(d,s)$ in~\eqref{eq:a critical incomplete beta} admits the
variational representation
\begin{equation} \label{a critical value in terms of max f}
    a_{\rm cri}(d,s) = \frac{ \frac{s}{2}+1 }{2\alpha R^{s+1}} \frac{1}{\mathsf{C}_{d,s} } f(\mathsf{x})=   \frac{ \frac{s}{2}+1 }{2\alpha R^{s+1}} \frac{1}{\mathsf{C}_{d,s} } \max_{x\in [0,\infty)}f(x),
\end{equation}
where $\mathsf{C}_{d,s}$ is given by \eqref{def of quad potential no constraint}. 

We now prove Lemma~\ref{lem: property f(x)}.

\begin{proof}[Proof of Lemma~\ref{lem: property f(x)}]
We first show \textrm{(i)}. Note that as $x\to\infty$, we have 
\begin{equation}\label{eq:Beta completion}
    \lim_{x\to \infty}\Big(\frac{s}{2}\mathrm{B}_{\frac{x^2}{1+x^2}}\Big(\frac{d-s-1}{2},\frac{s}{2}\Big)+x^{-s}\Big) 
    = \frac{\Gamma(\frac{d-s-1}{2})\Gamma(\frac{s}{2}+1)} {\Gamma(\frac{d-1}{2})}
    = \mathsf{C}_{d-1,s}\frac{d-1}{s+2},
\end{equation}
It therefore follows from~\eqref{def:f(x)} that, for all $s\in(d-3,d-1)$,
\begin{equation}
    \lim_{x\to\infty} \frac{f(x)}{x} =\lim_{x\to \infty}
    \Big(\frac{s}{2}\mathrm{B}_{\frac{x^2}{1+x^2}}\Big(\frac{d-s-1}{2},\frac{s}{2}\Big)+x^{-s}\Big) - \mathsf{C}_{d-1,s} = \mathsf{C}_{d-1,s} \frac{(d-3)-s}{s+2} <0.
\end{equation}
Consequently, $f(x)\to-\infty$ as $x\to\infty$. 

For the limit as $x\to0^+$, we first observe that 
\begin{equation}\label{eq:Beta to zero}
    \mathrm{B}_{\frac{x^2}{1+x^2}}\Big(\frac{d-s-1}{2},\frac{s}{2}\Big) =\int_0^{\frac{x^2}{1+x^2}} u^{-\alpha} (1-u)^{\frac{s}{2}-1}\ud u = \frac{x^{2-2\alpha}}{1-\alpha}\big(1+ O(x^2)\big), \quad x\to 0^+.
\end{equation}
For $s\in(d-3,d-2)$, we have $\alpha\in(0,\frac12)$, and hence \eqref{eq:Beta to zero} together with~\eqref{def:f(x)} readily gives $\lim_{x\to 0^+}f(x)=0$. On the other hand, for $s\in[d-2,d-1)$, we have
\begin{align*}
    \lim_{x\to0^+}\frac{f(x)}{x^{1-2\alpha}}
    &=
    \lim_{x\to0^+}
    \bigg[
        (1+x^2)^{-\frac{d-3}{2}}
        +
        \Big(\frac{s}{2}x+\frac{\alpha}{x}\Big)
        \frac{1}{x^{1-2\alpha}}
        \mathrm{B}_{\frac{x^2}{1+x^2}}
        \Big(
            \frac{d-s-1}{2},\frac{s}{2}
        \Big)
        -
        \mathsf{C}_{d-1,s}x^{2\alpha}
    \bigg]
    =
    \frac{1}{1-\alpha}.
\end{align*} 
Since $\alpha=\frac{1}{2}$ when $s=d-2$, while $\alpha\in(\frac{1}{2},1)$ when $s\in(d-2,d-1)$, the desired limits follow, completing the proof of~\ref{item: limits f(x)}.  

\medskip
Next, we prove~\textrm{(ii)}. We first note the useful representation 
\begin{equation}\label{eq:incomplete Beta integral}
    \mathrm{B}_{\frac{x^2}{1+x^2}}\Big(\frac{d-s-1}{2},\frac{s}{2}\Big) = \int_0^{x^2} v^{-\alpha}(1+v)^{-\frac{d-1}{2}}\ud v= x^{2-2\alpha}\int_0^1 u^{-\alpha}(1+x^2u)^{-\frac{d-1}{2}}\ud u.
\end{equation}
Using~\eqref{eq:incomplete Beta integral}, for $x>0$, we have 
\begin{align}\label{eq:f''(x)}
\begin{split}
    f''(x) 
    &= -2\alpha x^{-1-2\alpha}(1+x^2)^{-\frac{d-1}{2}}\bigg(2- \int_0^1  u^{-\alpha}\Big(\frac{1+x^2}{1+x^2u}\Big)^{\frac{d-1}{2}}\ud u\bigg).
\end{split}
\end{align}
Notice that the function
\[
h(x):= \int_0^1  u^{-\alpha}\Big(\frac{1+x^2}{1+x^2u}\Big)^{\frac{d-1}{2}}\ud u, \quad x\ge 0,
\]
is an increasing function in $x$. In particular, we have
\begin{equation}
    h(0) = \frac{1}{1-\alpha}\le 2,
\end{equation}
where the equality holds only if $s=d-2$. 

We first prove~\ref{item: maximum existence} in the case $s=d-2$. Since
$h(x)\ge h(0)=2$, it follows from~\eqref{eq:f''(x)} that 
\[
    f''(x) = -2\alpha x^{-1-2\alpha}(1+x^2)^{-\frac{d-1}{2}} (2-h(x) )\ge  0, \qquad x>0. 
\]
Thus, $f$ is convex on $(0,\infty)$, and hence $f'$ is increasing. On the
other hand, by~\eqref{eq:f'(x)} and~\eqref{eq:Beta completion}, 
\begin{align*}
    \lim_{x\to\infty} f'(x) &= \lim_{x\to\infty}\frac{s}{2}\mathrm{B}_{\frac{x^2}{1+x^2}}\Big(\frac{1}{2},\frac{s}{2}\Big) -\mathsf{C}_{d-1,d-2}=-\frac{1}{d}\mathsf{C}_{d-1,d-2}<0.
\end{align*}
Since $f'$ is increasing, we conclude that $f'(x)<0$ for all $x>0$.
Therefore, $f$ is strictly decreasing on $(0,\infty)$ and attains its
unique global maximum at $x=\mathsf{x}=0$. Using~\eqref{eq:Beta to zero}, we also obtain 
\[
    \lim_{x\to0^+}\frac{1}{x}\mathrm{B}_{\frac{x^2}{1+x^2}}\Big(\frac{1}{2},\frac{s}{2}\Big) = 2 = f(0),
\]
which shows (ii) for $s=d-2$.

We now turn to the case $s\in(d-3,d-2)$. By~\eqref{eq:f''(x)},
the inequality $h(0)<2$ implies that $f''(x)<0$ for all sufficiently
small $x>0$. Moreover, since $h$ is increasing, $f''$ has at most one
zero on $(0,\infty)$.

We claim that $f$ has a unique critical point on $(0,\infty)$. In view
of~\ref{item: limits f(x)} and the observation following~\eqref{eq:f'(x)},
this claim implies that~\eqref{eq:critical point incomplete beta} admits
a unique solution $\mathsf{x}\in(0,\infty)$ and that $f$ attains its
unique global maximum at $x=\mathsf{x}$.

We argue by contradiction. Since $f''$ has at most one zero on
$(0,\infty)$, the function $f$ has at most two critical points there.
Suppose that $f$ has exactly two critical points, say
$\mathsf{x}_1$ and $\mathsf{x}_2$, with $0<\mathsf{x}_1<\mathsf{x}_2$. Then it follows that $f''$ must vanish at
some point in $(\mathsf{x}_1,\mathsf{x}_2)$. Since $f''(x)<0$ for all
sufficiently small $x>0$ and $f''$ has at most one zero, it follows that
\[
    f''(\mathsf{x}_1)<0, \qquad f''(\mathsf{x}_2)>0.
\]
On the other hand, since $f''(x)>0$ for $x>\mathsf{x}_2$, the function
$f'$ is increasing on $(\mathsf{x}_2,\infty)$. Hence, by
\eqref{eq:f'(x)} and~\eqref{eq:Beta completion}, 
\[
    0=f'(\mathsf{x}_2)\le \lim_{x\to\infty}f'(x) = \lim_{x\to\infty}\frac{s}{2}\mathrm{B}_{\frac{x^2}{1+x^2}}\Big(\frac{d-s-1}{2},\frac{s}{2}\Big)-\mathsf{C}_{d-1,s} = \mathsf{C}_{d-1,s}\frac{(d-3)-s}{s+2}<0,
\]
which is a contradiction. Therefore, $f$ has a unique critical point
on $(0,\infty)$.  

It remains to verify~\eqref{eq:f(x) maximum characterisation} for
$s\in(d-3,d-2)$. From~\eqref{def:f(x)} and~\eqref{eq:f'(x)}, we readily obtain 
\begin{equation}
    f(x)-xf'(x)= \frac{2\alpha}{x}\mathrm{B}_{\frac{x^2}{1+x^2}}\Big(\frac{d-s-1}{2},\frac{s}{2}\Big).
\end{equation}
Evaluating this identity at the unique critical point $x=\mathsf{x}$,
for which $f'(\mathsf{x})=0$, gives the desired result \eqref{eq:f(x) maximum characterisation}. 
\end{proof}

\begin{rem}[Critical value for special cases] \label{Rem_critical special}
\label{rem:compute critical values}
We record some explicit formulas for $a_{\rm cri}(d,s)$ at special
parameter values, which follow readily from~\eqref{a critical value in terms of max f}.
For $s=d-2$, we have $\mathsf{x}=0$ and $f(0)=2$, and hence 
\[
    a_{\rm cri}(d,d-2)= \frac{1}{R^{d-1}} \frac{\Gamma(\frac{d}{2}+1)}{\Gamma(\frac{d}{2})} f(0) = d\Big(\frac{\Gamma(\frac{d+1}{2})}{\sqrt{\pi}\Gamma(\frac{d}{2}+1)}\Big)^{\frac{d-1}{d}},
\]
which recovers~\eqref{def of critical value Coulomb}. 
For $d=1$, by~\eqref{eq:incomplete Beta integral} and \eqref{def of alpha}, we have 
\[
    \mathrm{B}_{\frac{x^2}{1+x^2}}\Big(-\frac{s}{2},\frac{s}{2}\Big)= x^{2-2\alpha}\int_0^1 u^{-\alpha}\ud u =\frac{x^{2-2\alpha}}{1-\alpha} = -\frac{2}{s}x^{2-2\alpha}.
\]
Consequently,~\eqref{eq:critical point incomplete beta} reduces to 
\[
    x^{-2\alpha}(1+x^2) -x^{2-2\alpha}+\frac{2\alpha}{s}x^{-2\alpha} = \frac{2(s+1)}{s}x^{-(s+2)}= \mathsf{C}_{0,s}.
\]
Recalling that $s\in(-2,-1)$ when $d=1$, we obtain 
\[
    \mathsf{x}= \Big(\frac{s}{2(s+1)}\mathsf{C}_{0,s}\Big)^{-\frac{1}{s+2}}, \qquad \frac{1}{\mathsf{x}}\mathrm{B}_{\frac{x^2}{1+x^2}}\Big(-\frac{s}{2},\frac{s}{2}\Big)=\frac{2}{|s|}\mathsf{x}^{-s-1} = \frac{2}{|s|}\Big(\frac{s}{2(s+1)}\mathsf{C}_{0,s}\Big)^{\frac{s+1}{s+2}}.
\]
Therefore, it follows that 
\begin{align*}
    a_{\rm cri}(1,s) &=\Big(\frac{\mathsf{C}_{1,s}}{\mathsf{C}_{0,s}}\Big)^{\frac{s+1}{s+2}}
    \frac{\frac{s}{2}+1}{\mathsf{C}_{1,s}}   \frac{2}{|s|}\Big(\frac{s}{2(s+1)}\mathsf{C}_{0,s}\Big)^{\frac{s+1}{s+2}}
    = \frac{s+2}{|s|}\mathsf{C}_{1,s}^{-1} \Big(\frac{s}{2(s+1)}\mathsf{C}_{1,s}\Big)^{\frac{s+1}{s+2}}\\
    &=\frac{s+2}{2|s+1|}\Big(\frac{s}{2(s+1)}\mathsf{C}_{1,s}\Big)^{-\frac{1}{s+2}}= \frac{s+2}{2|s+1|}\Big(\frac{2(s+1)}{s}\frac{\Gamma(\frac{3}{2})}{\Gamma(\frac{1-s}{2})\Gamma(\frac{s}{2}+2)}\Big)^{\frac{1}{s+2}},
\end{align*}
which recovers~\eqref{def of critical value d=1}.
\end{rem}

We conclude this section by deriving an alternative representation of
$f$~\eqref{def:f(x)} in terms of the Gauss hypergeometric function
${}_2F_1$. This representation will play a crucial role in verifying
the Euler--Lagrange variational conditions in
Section~\ref{sec: proof of main thm}. 

\begin{lem}\label{lem:f(x) hypergeometric}
For $s\in(d-3,d-1)$, the function $f$ in \eqref{def:f(x)} admits the representation
\begin{equation}
\label{eq:f(x) hypergeometric}
    f(x)= \frac{x^{1-2\alpha}}{1-\alpha} {}_2F_1 (\tfrac{d-1}{2},-\alpha; 2-\alpha ;-x^2) - \mathsf{C}_{d-1,s}x. 
\end{equation}
\end{lem}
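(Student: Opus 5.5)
We prove the identity by working in the variable $v=x^2$, multiplying both sides by $x^{2\alpha-1}$, and adding $\mathsf{C}_{d-1,s}x$ to both sides; this removes the common term $-\mathsf{C}_{d-1,s}x$. Write $b:=\frac{d-1}{2}$ and note that $\alpha+b=\frac{s}{2}+1$, so $\frac{s}{2}=\alpha+b-1$. Since $s\in(d-3,d-1)$, we have $\alpha\in(0,1)$. By \eqref{eq:incomplete Beta integral},
\[
\mathrm{B}_{\frac{v}{1+v}}\Big(\frac{d-s-1}{2},\frac{s}{2}\Big)=B(v):=\int_0^v w^{-\alpha}(1+w)^{-b}\ud w .
\]
After multiplying by $v^{1-\alpha}$, the claim \eqref{eq:f(x) hypergeometric} is equivalent to $\Phi(v)=\Psi(v)$ for $v>0$, where
\[
\Phi(v):=v^{1-\alpha}(1+v)^{1-b}+\big((\alpha+b-1)v+\alpha\big)B(v),\qquad
\Psi(v):=\frac{v^{1-\alpha}}{1-\alpha}\,{}_2F_1(b,-\alpha;2-\alpha;-v).
\]

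The plan is to show that $\Phi''=\Psi''$ on $(0,\infty)$ and that $\Phi-\Psi$ and $(\Phi-\Psi)'$ both tend to $0$ as $v\to0^+$. Then $\Phi-\Psi$ is affine and its two coefficients vanish, so $\Phi=\Psi$.

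\emph{Derivatives of $\Phi$.} We differentiate directly, using $B'(v)=v^{-\alpha}(1+v)^{-b}$. In $\Phi'$, the terms carrying the factor $v^{-\alpha}(1+v)^{-b}$ combine to $v^{-\alpha}(1+v)^{-b}\big[(1-\alpha)(1+v)+\alpha v+\alpha\big]$, and the bracket equals $1+v$. Hence
\[
\Phi'(v)=v^{-\alpha}(1+v)^{1-b}+(\alpha+b-1)B(v).
\]
Differentiating once more, and using $(1-b)+(\alpha+b-1)=\alpha$, the remaining terms collapse to
\[
\Phi''(v)=-\alpha v^{-\alpha-1}(1+v)^{-b}.
\]

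\emph{Derivatives of $\Psi$.} We use the standard differentiation rule $\frac{\ud}{\ud z}\big[z^{c-1}{}_2F_1(a,b;c;z)\big]=(c-1)z^{c-2}{}_2F_1(a,b;c-1;z)$, which still applies after substituting $z=-v$ because $(-v)^{c-1}$ and $v^{c-1}$ differ only by a constant phase. Applying it with $c=2-\alpha$ gives
\[
\Psi'(v)=v^{-\alpha}\,{}_2F_1(b,-\alpha;1-\alpha;-v).
\]
Applying it again with $c=1-\alpha$ gives
\[
\Psi''(v)=-\alpha v^{-\alpha-1}\,{}_2F_1(b,-\alpha;-\alpha;-v)=-\alpha v^{-\alpha-1}(1+v)^{-b},
\]
since ${}_2F_1(b,c;c;z)=(1-z)^{-b}$. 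Thus $\Phi''=\Psi''$.

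\emph{Boundary behaviour at $0$.} Since $0<\alpha<1$, we have $B(v)=\frac{v^{1-\alpha}}{1-\alpha}(1+O(v))$. It follows that
\[
\Phi(v)=v^{1-\alpha}\Big(1+\frac{\alpha}{1-\alpha}\Big)+O(v)=\frac{v^{1-\alpha}}{1-\alpha}+O(v),
\]
and, from the series \eqref{def of 2F1 ftn}, $\Psi(v)=\frac{v^{1-\alpha}}{1-\alpha}+O(v^{2-\alpha})$. So $\Phi-\Psi\to0$. Similarly, $\Phi'(v)-v^{-\alpha}=O(v^{1-\alpha})$ and $\Psi'(v)-v^{-\alpha}=O(v^{1-\alpha})$, so $(\Phi-\Psi)'\to0$ as well.

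The main point requiring care is the bookkeeping in the derivative of $\Phi$, where the cancellations depend on the identity $\alpha+b=\frac{s}{2}+1$. A second point is that the hypergeometric differentiation rule is applied with the negative argument $-v$. Neither step is conceptually hard. The boundary limits hinge on $\alpha<1$, which is precisely the assumption $s<d-1$.
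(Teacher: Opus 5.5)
Your proof is correct, but it follows a different route from the paper's. The paper argues purely algebraically. It rewrites the incomplete beta function through the Euler integral as $\frac{x^{2-2\alpha}}{1-\alpha}\,{}_2F_1(\tfrac{d-1}{2},1-\alpha;2-\alpha;-x^2)$, and writes $(1+x^2)^{-\frac{d-3}{2}}=(1+x^2)\,{}_2F_1(\tfrac{d-1}{2},2-\alpha;2-\alpha;-x^2)$. It then merges the three resulting terms with one Gauss contiguous relation (DLMF 15.5.11) in the second parameter, stepping through $2-\alpha$, $1-\alpha$, $-\alpha$.

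You avoid contiguous relations entirely. After multiplying by $x$ (the net effect of your two rescalings), you show that both sides have the same second derivative $-\alpha v^{-\alpha-1}(1+v)^{-b}$. For this you use only the lowering rule for $z^{c-1}{}_2F_1(a,b;c;z)$ and the identity ${}_2F_1(b,c;c;z)=(1-z)^{-b}$. You then fix the two integration constants from the behaviour as $v\to0^+$.

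The steps all check out:
\begin{itemize}
\item the cancellations in $\Phi'$ and $\Phi''$, which indeed rest on $\alpha+b=\tfrac{s}{2}+1$;
\item the two applications of the lowering rule, which are legitimate since $1-\alpha$ and $-\alpha$ are not non-positive integers, and which hold for the analytically continued function, so $v\ge1$ is covered;
\item the expansions at $0$.
\end{itemize}

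The hypothesis $\alpha<1$ enters your argument through the boundary matching. In the paper it enters instead as the convergence condition $2-\alpha>1-\alpha>0$ for the Euler integral. The paper's proof is shorter once the correct contiguous relation has been identified and its parameters matched. Yours is self-contained and easy to verify line by line; in effect it proves the needed special case of that contiguous relation by hand.
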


\begin{proof}
We begin with the representation~\eqref{eq:incomplete Beta integral},
which, by the integral representation of the Gauss hypergeometric
function (see, e.g., \cite[Eq.~(15.6.1)]{NIST})
\begin{equation}
\label{def of 2F1 Euler integral}
    \int_0^1
    u^{b-1}(1-u)^{c-b-1}(1-zu)^{-a}\,\ud u
    =
    \frac{\Gamma(b)\Gamma(c-b)}{\Gamma(c)}
    {}_2F_1(a,b;c;z),
    \qquad
    (\re c>\re b>0)
\end{equation}
gives
\begin{equation}
    \mathrm{B}_{\frac{x^2}{1+x^2}}\Big(\frac{d-s-1}{2},\frac{s}{2}\Big) = x^{2-2\alpha}\int_0^1 u^{-\alpha}(1+x^2u)^{-\frac{d-1}{2}} \ud u = \frac{x^{2-2\alpha}}{1-\alpha}{}_2F_1(\tfrac{d-1}{2},1-\alpha; 2-\alpha ;-x^2). 
\end{equation}
Next, by the generalised binomial theorem and \cite[Eq.~(15.4.6)]{NIST}, we have 
\begin{equation}
    (1+x^2)^{-\frac{d-3}{2}} = (1+x^2)\cdot (1+x^2)^{-\frac{d-1}{2}}= (1+x^2){}_2F_1(\tfrac{d-1}{2}, 2-\alpha; 2-\alpha; -x^2).
\end{equation}
Finally, Gauss's contiguous relation~\cite[Eq.~(15.5.11)]{NIST} yields
\[
    (1-\alpha)(1+x^2){}_2F_1(\tfrac{d-1}{2},2-\alpha;2-\alpha;-x^2) + \Big(\frac{s}{2}x^2+\alpha\Big) {}_2F_1(\tfrac{d-1}{2},1-\alpha;2-\alpha;-x^2)
    =  {}_2F_1(\tfrac{d-1}{2}, -\alpha;2-\alpha ;-x^2). 
\]
Substituting these identities into~\eqref{def:f(x)} gives the desired
representation~\eqref{eq:f(x) hypergeometric}. 
\end{proof}

\section{Proof of Theorem \texorpdfstring{\ref{thm: fully confined}}{2.2}}
\label{sec: proof of main thm}

This section is devoted to the proof of Theorem~\ref{thm: fully confined}.
It suffices to consider the case of nonzero $s\in(d-3,d-1)$. Indeed,
as mentioned above, half-space constrained equilibrium measures in the
Coulomb case have already been characterised in~\cite{ASZ14,FIT26}.
On the other hand, for $s\in[d-1,d)$, the Riesz interaction becomes
hypersingular when restricted to the hyperplane $\{x_d=a\}$, so that
any probability measure supported on this hyperplane has infinite
Riesz energy; see~\cite{ADKKMMS19,HLSS18}. Consequently, the
Euler--Lagrange equality  \eqref{eq:EL condition} cannot hold with a finite Robin constant, and hence a fully confined phase is impossible in this regime.

We begin by observing that the half-space constrained equilibrium measure
$\widehat{\mu}_a$ is fully confined to the hyperplane $\{x_d=a\}$ if and
only if a measure of the form
\begin{equation} \label{eq: mu-hat-a confined ansatz}
    \ud \widehat\mu_a(x) = \begin{cases}
        \ud \nu(\hat x) \otimes \delta_a(x_d), & \text{if } d \geq 2,
        \smallskip 
        \\
        \delta_a(x), & \text{if } d =1
    \end{cases}
\end{equation}
satisfies the Euler--Lagrange conditions~\eqref{eq:EL condition} for some
probability measure $\nu$ on $\R^{d-1}$. 
For $d\ge2$, the Euler--Lagrange equality restricted to the hyperplane
$\{x_d=a\}$ uniquely determines $\nu$ as 
\begin{equation} \label{eq: ansatz measure nu}
    \ud \nu(\hat x) = \frac{\Gamma(\frac{s}{2}+2)}{\Gamma(\alpha+1)\Gamma(\frac{d+1}{2})}\Big(1-\frac{|\hat x|^2}{R^2}\Big)^{\alpha} \cdot \mathbbm{1}_{|\hat x|\le R} \frac{\Gamma(\frac{d+1}{2})}{(\sqrt{\pi}R)^{d-1}}\ud \hat x,
\end{equation}
where $\alpha$ and $R$ are given by~\eqref{def of alpha} and
\eqref{def of R}, respectively. Indeed, $\nu$ is precisely the
equilibrium measure associated with the $(d-1)$-dimensional external
potential 
\begin{equation}
    W(\hat{x}) := V(\hat x,a)-V(0,a) = V(\hat x,0)= \mathsf{C}_{d,s} \, |\hat x|^2.
\end{equation}
Consequently, the Euler--Lagrange equality for~\eqref{eq: mu-hat-a confined ansatz} on the hyperplane $\{x_d=a\}$ follows directly from the characterisation of the unconstrained equilibrium measure (Proposition~\ref{Prop_eq msr unconstrained}).

The main challenge in determining whether the half-space constrained
equilibrium measure is fully confined lies in verifying the
Euler--Lagrange inequality~\eqref{eq:EL condition} in the interior
$\{x_d>a\}$. To this end, for $a\in\R$ and $t\ge0$, we define 
\begin{align} \label{eq: def F(x,t)}
    F(\hat{x},t) := \begin{dcases}
        2\int_{\R^{d-1}} g_s((\hat{x}-y, t)) \ud\nu(y)+V(\hat{x},a+t), & \text{if } d\geq 2,
        \smallskip 
        \\
        \frac{2}{s} t^{-s} + V(a+t) &\text{if } d = 1.
    \end{dcases}
\end{align}
The Euler--Lagrange conditions~\eqref{eq:EL condition} for the candidate
fully confined measure are equivalent to 
\begin{equation} \label{eq: EL via F(x,t)}
    F(\hat x,t) 
    \begin{cases}
        = F(0,0), &\textup{if } |\hat x|\le R, \;t=0,
        \smallskip 
        \\
        \ge F(0,0), &\text{otherwise},
    \end{cases}
\end{equation}
where we have used the fact that $(0,a)$ belongs to the support of the
candidate measure. By construction, the equality in~\eqref{eq: EL via F(x,t)}
on the hyperplane $t=0$ has already been established. It therefore remains
to verify the inequality for $t>0$. We begin by deriving an explicit
formula for $F(0,t)$ as an analogue of \cite[Lemma 4.1]{BFMS26}.

\begin{lem}\label{lem:F(0,t)}
For nonzero $s\in(d-3,d-1)$ and $t\ge0$, we have
\begin{align}\label{eq:F(0,t)}
\begin{split}
    F(0,t) &=\frac{(\frac{s}{2}+1)}{\alpha(\alpha-1)} \frac{t^{d-s-1}}{R^{d-1}} {}_2F_1 (\tfrac{d-1}{2},-\alpha; 2-\alpha ;-t^2/R^2) + \mathsf{C}_{d,s}\Big(a^2 + \frac{d-1}{s}R^2  + 2at + \frac{s+2}{2\alpha}t^2\Big),
\end{split}
\end{align}
where $\mathsf{C}_{d,s}$ is given by \eqref{def of quad potential no constraint}. 
In particular, we have
\begin{equation}\label{eq:F(0,0)}
    F(0,0)=\mathsf{C}_{d,s}  \Big(a^2+\frac{d-1}{s}R^2\Big). 
\end{equation}
\end{lem}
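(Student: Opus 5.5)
We compute $F(0,t)$ directly from the definition \eqref{eq: def F(x,t)} for $d\ge2$; the case $d=1$ is immediate, since $F(0,t)=\frac{2}{s}t^{-s}+\mathsf{C}_{1,s}(a+t)^2$, and can be checked against \eqref{eq:F(0,t)} separately with $R$ and the hypergeometric term degenerating appropriately. The potential part is simply $V(0,a+t)=\mathsf{C}_{d,s}(a^2+2at+t^2)$. The remaining task is the Riesz potential
\[
\frac{2}{s}\int_{\R^{d-1}}\frac{\ud\nu(y)}{(|y|^2+t^2)^{s/2}},
\]
which is radial in $y$. We pass to polar coordinates in $\R^{d-1}$ using \eqref{eq:surface area} with $d-1$ in place of $d$, rescale $|y|=Rr$, and substitute $u=r^2$. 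This reduces the potential to a constant times
\[
R^{-s}\int_0^1 u^{\frac{d-1}{2}-1}(1-u)^{\alpha}\big(u+t^2/R^2\big)^{-s/2}\ud u .
\]
At $t=0$ this is a Beta integral. Using the relation between the normalising constant in \eqref{eq: ansatz measure nu} and $R$ from \eqref{def of R}, it gives $\mathsf{C}_{d,s}\frac{d-1}{s}R^2$, which yields \eqref{eq:F(0,0)}. Alternatively, $F(0,0)$ can be obtained from the constant in \eqref{EL equality part unconstrained} applied in dimension $d-1$.

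For $t>0$, write $\tau=t^2/R^2$. The integral is a Gauss hypergeometric function in $-1/\tau$, by \eqref{def of 2F1 Euler integral} after factoring out $\tau^{-s/2}$. I would then apply the connection formula \cite[Eq.~(15.8.2)]{NIST} for $z\mapsto 1/z$ to rewrite it as the sum of two pieces. The first is an analytic series in $\tau$, whose parameters should truncate (one upper parameter being a nonpositive integer, such as $-1$) to a polynomial of degree one in $\tau$. The second is the singular piece $\tau^{(d-1-s)/2}$ times ${}_2F_1(\cdot,\cdot;\cdot;-\tau)$, which after Euler/Pfaff transformations should match ${}_2F_1(\tfrac{d-1}{2},-\alpha;2-\alpha;-\tau)$. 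Note that $(d-1-s)/2=1-\alpha$, consistent with the exponent in \eqref{eq:f(x) hypergeometric}: the $t^{d-s-1}$ factor in \eqref{eq:F(0,t)} equals $(t^2)^{1-\alpha}$. The truncated polynomial part contributes $\mathsf{C}_{d,s}\frac{d-1}{s}R^2$ plus a term linear in $t^2$. Combined with the $\mathsf{C}_{d,s}t^2$ from $V$, the $t^2$ coefficient should become $\mathsf{C}_{d,s}\frac{s+2}{2\alpha}$, so I will simplify the Gamma ratios using $\mathsf{C}_{d-1,s}=R^{s+2}\mathsf{C}_{d,s}$.

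An alternative route, which I would try if the connection formula gives unwieldy Gamma prefactors, is to differentiate in $t$. Then $\partial_t$ of the Riesz potential is an integral that can be evaluated as an incomplete Beta function of argument $\frac{R^2}{R^2+t^2}$, or equivalently $\frac{x^2}{1+x^2}$ with $x=R/t$. One would then match this against the derivative of the right-hand side of \eqref{eq:F(0,t)}, reusing the identity of Lemma~\ref{lem:f(x) hypergeometric} with $x=t/R$ together with \eqref{eq:f'(x)}, and fix the integration constant using $t=0$.

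The main obstacle is the bookkeeping of Gamma-function constants. Getting exactly $\frac{s/2+1}{\alpha(\alpha-1)}R^{-(d-1)}$ in front of the singular term, and the coefficient $\frac{s+2}{2\alpha}$ for $t^2$, requires repeated reflection and duplication identities. Some care is also needed because the connection formula degenerates when $\frac{d-1-s}{2}$ is an integer; for $s\in(d-3,d-1)$ this forces $\alpha\in(0,1)$, so no degeneracy occurs, and $s\neq0$ avoids the logarithmic kernel.
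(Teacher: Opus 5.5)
Your proposal is correct and follows the paper's proof essentially step for step: polar coordinates plus the Euler integral give $\frac{2}{st^{s}}\,{}_2F_1(\tfrac{d-1}{2},\tfrac{s}{2};\tfrac{s}{2}+2;-R^2/t^2)$, and the connection formula \cite[Eq.~(15.8.2)]{NIST} splits this into the singular term and the terminating piece ${}_2F_1(\tfrac{s}{2},-1;\alpha;-t^2/R^2)=1+\tfrac{s}{2\alpha}\tfrac{t^2}{R^2}$, exactly as you anticipate. The bookkeeping is lighter than you fear: the parameters come out directly as $(\tfrac{d-1}{2},-\alpha;2-\alpha)$ with no Euler/Pfaff step, and the two Gamma prefactors reduce by the functional equation alone to $\frac{\frac{s}{2}(\frac{s}{2}+1)}{\alpha(\alpha-1)}R^{1-d}t^{d-1}$ and $\frac{d-1}{2}\mathsf{C}_{d-1,s}R^{-s}t^{s}$.
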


\begin{proof}
The assertion is immediate for $d=1$, since
${}_2F_1(0,b;c;z)=1$. We therefore assume $d\ge2$ in what follows.

It follows from~\eqref{def of interaction g} and~\eqref{eq: def F(x,t)}
that 
\begin{equation}
    F(0,t) = \frac{2}{s} \int_{\R^{d-1}} \frac{\ud \nu(y)}{(|y|^2 + t^2)^{s/2}} + V_a(0,a+t).
\end{equation}
Using~\eqref{eq: ansatz measure nu} and polar coordinates, we obtain
\begin{align*}
    \int_{\R^{d-1}}\frac{\ud\nu(y)}{(|y|^2+t^2)^{s/2}} &= \frac{\Gamma(\frac{s}{2}+2)}{\Gamma(\alpha+1)\Gamma(\frac{d+1}{2})}\int_{|y|\le 1} \frac{(1-|y|^2)^\alpha}{(R^2|y|^2+t^2)^{s/2}} \frac{\Gamma(\frac{d+1}{2})}{\pi^{\frac{d-1}{2}}}\ud y\\
    &= \frac{\Gamma(\frac{s}{2}+2)(d-1)}{\Gamma(\alpha+1)\Gamma(\frac{d+1}{2})}\int_0^1 \frac{(1-r^2)^\alpha}{(R^2r^2+t^2)^{s/2}}r^{d-2}\ud r\\
    &= \frac{\Gamma(\frac{s}{2}+2)(d-1)}{\Gamma(\alpha+1)\Gamma(\frac{d+1}{2})}\frac{1}{t^s}\int_0^1 \Big(1+\frac{R^2r^2}{t^2}\Big)^{-s/2} (1-r^2)^\alpha r^{d-2}\ud r.  
\end{align*}
Therefore, by the Euler integral representation \eqref{def of 2F1 Euler integral} of the Gauss hypergeometric
function, we obtain 
\begin{equation}\label{eq:F(0,t) inter}
    F(0,t) =\frac{2}{st^s}{}_2F_1(\tfrac{d-1}{2},\tfrac{s}{2}; \tfrac{s}{2}+2 ;- R^2/t^2) + \mathsf{C}_{d,s} (a+t)^2 .
\end{equation}

We further recall the linear transformation formula for the Gauss
hypergeometric function (see \cite[Eq.~(15.8.2)]{NIST})
\begin{align}
\begin{split}
    {}_2F_1(a,b;c;-z) &= \frac{\Gamma(b-a)\Gamma(c)z^{-a}}{\Gamma(b)\Gamma(c-a)} {}_2F_1(a, a-c+1;a-b+1;-1/z)\\
    &\quad + \frac{\Gamma(a-b)\Gamma(c)z^{-b}}{\Gamma(a)\Gamma(c-b)} {}_2F_1(b,b-c+1; b-a+1; -1/z),
\end{split}
\end{align}
valid for $z>0$ and $b-a\notin\mathbb{Z}$. Substituting $a=\frac{d-1}{2}$, $b=\frac{s}{2}$, and $c=\frac{s}{2}+2$, we obtain
\begin{align*}
    {}_2F_1(\tfrac{d-1}{2},\tfrac{s}{2};\tfrac{s}{2}+2; -R^2/t^2)
    &= \frac{\frac{s}{2}(\frac{s}{2}+1)}{\alpha(\alpha-1)}\frac{t^{d-1}}{R^{d-1}} {}_2F_1(\tfrac{d-1}{2}, -\alpha; 2-\alpha; -t^2/R^2)\\
    &\quad +\frac{\Gamma(1-\alpha)\Gamma(\frac{s}{2}+2)}{\Gamma(\frac{d-1}{2})}\frac{t^s}{R^s}\Big(1+\frac{s}{2\alpha}\frac{t^2}{R^2}\Big),
\end{align*}
where in the last term we have used the terminating case of the Gauss hypergeometric function~\cite[Eq.~(15.2.4)]{NIST}. Substituting this into~\eqref{eq:F(0,t) inter} yields the desired formula \eqref{eq:F(0,t)}. 
\end{proof}

We are now ready to prove Theorem~\ref{thm: fully confined} in the
weakly long-ranged regime.

\begin{proof}[Proof of Theorem~\ref{thm: fully confined}~\ref{thm: fully confined weak}]
 It suffices to consider $s\in(d-2,d-1)$. Indeed, as noted above, for
$s\in[d-1,d)$ the Riesz interaction becomes hypersingular when restricted
to the hyperplane $\{x_d=a\}$, so that a fully confined equilibrium
measure cannot occur.

By Lemma~\ref{lem:F(0,t)}, as $t\downarrow 0$, we have 
    \begin{equation}
        F(0,t)-F(0,0) = \frac{s+2}{2\alpha (\alpha-1)}\frac{t^{d-s-1}}{R^{d-1}} + O(t). 
    \end{equation} 
Recall from~\eqref{def of alpha} that $\alpha\in(1/2,1)$. Since the coefficient of the leading-order term is negative, for all sufficiently small $t>0$, we have $F(0,t)<F(0,0)$. Thus, the Euler--Lagrange inequality fails for the candidate fully confined measure $\nu\otimes\delta_a$.
Consequently, the equilibrium measure $\widehat{\mu}_a$ can never be
fully confined to the hyperplane $\{x_d=a\}$.
\end{proof}

We now turn to the strongly long-ranged regime $s\in(d-3,d-2]$.
In this case, we need to verify the Euler--Lagrange inequality~\eqref{eq: EL via F(x,t)} for all $\hat{x}\in\R^{d-1}$ and $t\ge0$, which would be cumbersome to establish directly. We circumvent this difficulty by showing that it suffices to verify the inequality along $\hat{x}=0$. For this purpose, we use the following lemma, which is a direct consequence of~\cite[Lemma~2.2 (i)]{Li83}. 

\begin{lem}\label{lem:convolution}
Let $d\ge1$, and let $\phi,\psi:\R^d\to[0,\infty]$ be measurable,
radially symmetric, and radially non-increasing. Suppose that the
convolution
\begin{equation}
  (\phi*\psi)(x)
    :=
    \int_{\R^d}\phi(x-y)\psi(y)\,\ud y
\end{equation} 
is well defined and finite for every $x\in\R^d$. Then $\phi*\psi$ is
also radially symmetric and radially non-increasing.
\end{lem}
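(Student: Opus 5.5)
The plan is to reduce the statement, via the layer-cake representation, to convolutions of indicator functions of centred balls, and then to a one-dimensional computation along lines. This is the standard route behind \cite[Lemma~2.2]{Li83}.

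\textbf{Radial symmetry.} For any rotation $O\in O(d)$, substitute $y=Oy'$. Since $\phi(Ox-Oy')=\phi(x-y')$ and $\psi(Oy')=\psi(y')$, we get $(\phi*\psi)(Ox)=(\phi*\psi)(x)$. It therefore remains to prove that $\lambda\mapsto(\phi*\psi)(\lambda e)$ is non-increasing on $[0,\infty)$ for a fixed unit vector $e$.

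\textbf{Layer-cake reduction.} Write $\phi(z)=\int_0^\infty \mathbbm{1}_{\{\phi>\sigma\}}(z)\,\ud\sigma$, and similarly for $\psi$. Because $\phi$ is radial and radially non-increasing, each superlevel set $A_\sigma=\{\phi>\sigma\}$ is a centred open or closed ball, which may be empty or all of $\R^d$ (a ball of radius $\infty$). The same holds for $B_\tau=\{\psi>\tau\}$. All integrands are nonnegative, so Tonelli's theorem gives
\[
(\phi*\psi)(x)=\int_0^\infty\!\!\int_0^\infty (\mathbbm{1}_{A_\sigma}*\mathbbm{1}_{B_\tau})(x)\,\ud\sigma\,\ud\tau .
\]
Hence it suffices to show that each $\lambda\mapsto(\mathbbm{1}_{A}*\mathbbm{1}_{B})(\lambda e)$ is non-increasing for $\lambda\ge0$, where $A,B$ are centred balls. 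The values of these inner convolutions may be $+\infty$; this is harmless, since monotonicity is preserved under integration of $[0,\infty]$-valued functions. The finiteness hypothesis is only used to make the conclusion meaningful.

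\textbf{Slicing along $e$.} Decompose $y=te+z$ with $t\in\R$ and $z\in e^\perp$. By Fubini,
\[
(\mathbbm{1}_A*\mathbbm{1}_B)(\lambda e)=\int_{e^\perp}\int_\R \mathbbm{1}_{I_z}(\lambda-t)\,\mathbbm{1}_{J_z}(t)\,\ud t\,\ud z .
\]
Here $I_z=\{t:|t|^2+|z|^2\in A\text{-radius condition}\}$ and $J_z$ is defined similarly from $B$. These are intervals symmetric about $0$ (possibly empty or all of $\R$). The inner integral is the one-dimensional convolution $(\mathbbm{1}_{I_z}*\mathbbm{1}_{J_z})(\lambda)$ of two symmetric intervals. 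Explicitly, it equals $\bigl|[\lambda-p,\lambda+p]\cap[-q,q]\bigr|$, which is a trapezoid-shaped function of $\lambda$ (or identically $\infty$). It is therefore even and non-increasing in $|\lambda|$. Integrating over $z\in e^\perp$ preserves this monotonicity.

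\textbf{Main obstacle.} The only delicate point is bookkeeping rather than a genuine difficulty. One must handle superlevel sets that are open versus closed balls, or degenerate (empty or all of $\R^d$), and apply Tonelli to possibly infinite quantities. I would treat all these cases uniformly by allowing radii in $[0,\infty]$ and interval lengths in $[0,\infty]$. The one-dimensional interval computation then covers every case, and only nonnegativity is needed to justify all interchanges.
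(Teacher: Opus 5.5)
Your proof is correct, and it differs from the paper only in that the paper gives no argument: it invokes \cite[Lemma~2.2\,(i)]{Li83} and stops. You instead prove the cited fact from scratch, in three steps:
\begin{enumerate}
\item Rotation invariance of Lebesgue measure gives radial symmetry.
\item The layer-cake formula and Tonelli reduce the monotonicity to $\mathbbm{1}_A*\mathbbm{1}_B$ for centred open or closed balls $A,B$ with radii in $[0,\infty]$.
\item Slicing along $e$ reduces this to the one-dimensional overlap $\bigl|[\lambda-p,\lambda+p]\cap[-q,q]\bigr|=\max\{0,\min\{2p,2q,p+q-\lambda\}\}$ for $\lambda\ge 0$, which is visibly non-increasing. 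The fourth case, $\lambda+p\le q$ with $\lambda-p<-q$, cannot occur when $\lambda\ge0$.
\end{enumerate}
The citation buys brevity, but it leaves the reader to match Lieb's precise hypotheses to the present setting of everywhere-defined $[0,\infty]$-valued functions with a pointwise-defined convolution. Your argument works directly in that setting and needs only nonnegativity and measurability for all interchanges. It also shows that the finiteness assumption plays no role in the monotonicity itself; finiteness only matters later, when the paper integrates $(\phi*\psi_{t'})(\hat x)$ over $t'$ and subtracts.

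Two cosmetic points are worth writing out. For $d=1$ the slicing step is vacuous, since $e^\perp=\{0\}$ and the interval computation applies directly. Your informal definition of $I_z$ should read $I_z=\{u\in\R:\ u^2+|z|^2<r_A^2\}$, with $\le$ when $A$ is a closed ball, and analogously for $J_z$ with the radius of $B$.
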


\begin{proof}[Proof of Theorem~\ref{thm: fully confined}~\ref{thm: fully confined strong}] 
We first examine the Euler--Lagrange inequality~\eqref{eq: EL via F(x,t)}
along $\hat{x}=0$. By Lemmas~\ref{lem:f(x) hypergeometric}
and~\ref{lem:F(0,t)}, we have  
    \begin{equation}
    \begin{split}
        F(0,t)-F(0,0) &=\frac{\frac{s}{2}+1}{\alpha(\alpha-1)} \frac{t^{d-s-1}}{R^{d-1}} {}_2F_1 (\tfrac{d-1}{2},-\alpha; 2-\alpha ;-t^2/R^2)
        + \mathsf{C}_{d,s} \Big(2at + \frac{s+2}{2\alpha}t^2\Big)
        \\
        &=  \Big(\frac{s}{2}+1 \Big) \, t \, \Big( \mathsf{C}_{d,s} \frac{4a}{s+2} - \frac{1}{\alpha R^{s+1}}f(t/R) \Big). 
    \end{split}
    \end{equation} 
    Here, we recall that $\alpha$ is given by \eqref{def of alpha}.  
    Since $t/R$ ranges over $[0,\infty)$ as $t$ ranges over
$[0,\infty)$, Lemma~\ref{lem: property f(x)} implies that 
\begin{equation} \label{F(0,t) vs F(0,0)}
    F(0,t)\ge F(0,0)
    \qquad\text{for all }t\ge0
\end{equation} 
if and only if
\[
  \mathsf{C}_{d,s} \frac{4a}{s+2} 
    \ge
    \frac{1}{\alpha R^{s+1}}
    \max_{x\in[0,\infty)}f(x).
\]
By~\eqref{a critical value in terms of max f}, this is equivalent to $   a\ge a_{\rm cri}(d,s).$  
Therefore, the Euler--Lagrange inequality along $\hat{x}=0$ holds for all $t\ge0$ if and only if $a\ge a_{\rm cri}(d,s)$. For $d=1$, this already completes the proof of the theorem.

  It remains to prove the assertion for $d\ge2$. For this, all we need to show is 
  \begin{equation} \label{F(x,t) vs F(0,0)}
      F(\hat{x},t)\ge F(0,0)
    \qquad\text{for all }\hat{x}\in\R^{d-1},\quad t>0.
  \end{equation} 
Once this claim is established, the result for $d\ge2$ follows
immediately. 
To prove the claim, observe that, for $t>0$,  
    \begin{align}
    \frac{\partial}{\partial t}
    \bigg[
        \frac{2}{s}
        \int_{\R^{d-1}}
        \frac{\ud\nu(y)}
        {(|\hat{x}-y|^2+t^2)^{s/2}}
    \bigg]
    &=
    -\int_{\R^{d-1}}
    \frac{2t}
    {(|\hat{x}-y|^2+t^2)^{\frac{s}{2}+1}}
    \,\ud\nu(y)
    =
    -(\phi*\psi_t)(\hat{x}),
\end{align}
    where
    \begin{equation}
        \phi(y) = \frac{\ud\nu}{\ud y}(y), \qquad \psi_t(y) = \frac{2t}{(|y|^2+t^2)^{\frac{s}{2}+1}}. 
    \end{equation}
   Both $\phi$ and $\psi_t$ are nonnegative, radially symmetric, and
radially non-increasing. Hence, by Lemma~\ref{lem:convolution},
$\phi*\psi_t$ is radially non-increasing, and therefore
\begin{equation}
    (\phi*\psi_t)(\hat{x})
    \le
    (\phi*\psi_t)(0).
\end{equation}
It follows that 
\begin{align*}
    F(\hat{x},t)-F(\hat{x},0)
    &= \int_0^t \frac{\partial}{\partial t'}\bigg(\frac{2}{s}\int_{\R^{d-1}}\frac{\ud\nu(y)}{(|\hat{x}-y|^2+t'^2)^{s/2}}\bigg) \ud t' +V(\hat{x},a+t)-V(\hat{x},a)
    \\
    &=
    -\int_0^t
    (\phi*\psi_{t'})(\hat{x})\,\ud t'
    +V(\hat{x},a+t)-V(\hat{x},a)
    \\
    &\ge
    -\int_0^t
    (\phi*\psi_{t'})(0)\,\ud t'
    +V(0,a+t)-V(0,a)
    =
    F(0,t)-F(0,0)
    \ge0,
\end{align*} 
  where we have used \eqref{F(0,t) vs F(0,0)} and  
  $$
  V(\hat{x},a+t)-V(\hat{x},a)
    =
    V(0,a+t)-V(0,a).
  $$   
Consequently,
\[
    F(\hat{x},t)\ge F(\hat{x},0)\ge F(0,0),
\]
where the second inequality follows from the Euler--Lagrange inequality
on the hyperplane $\{x_d=a\}$. This proves~\eqref{F(x,t) vs F(0,0)}
and completes the proof. 
\end{proof}



\end{document}